\newtheorem{theorem}{Theorem}
\newtheorem{definition}{Definition}
\newtheorem{proposition}{Proposition}
\begin{document}

\title{Intrinsic preservation of plasticity in continual quantum  learning}

\author{Yu-Qin Chen}
\email{yqchen@gscaep.ac.cn}
\affiliation{Graduate School of China Academy of Engineering Physics, Beijing 100193, China}

\author{Shi-Xin Zhang}
\email{shixinzhang@iphy.ac.cn}
\affiliation{Institute of Physics, Chinese Academy of Sciences, Beijing 100190, China}

\date{\today}

\begin{abstract}

Artificial intelligence in dynamic, real-world environments requires the capacity for continual learning. However, standard deep learning suffers from a fundamental issue: loss of plasticity, in which networks gradually lose their ability to learn from new data. Here we show that quantum learning models naturally overcome this limitation, preserving plasticity over long timescales. We demonstrate this advantage systematically across a broad spectrum of tasks from multiple learning paradigms, including supervised learning and reinforcement learning, and diverse data modalities, from classical high-dimensional images to quantum-native datasets. Although classical models exhibit performance degradation correlated with unbounded weight and gradient growth, quantum neural networks maintain consistent learning capabilities regardless of the data or task. We identify the origin of the advantage as the intrinsic physical constraints of quantum models. Unlike classical networks where unbounded weight growth leads to landscape ruggedness or saturation, the unitary constraints confine the optimization to a compact manifold. Our results suggest that the utility of quantum computing in machine learning extends beyond potential speedups, offering a robust pathway for building adaptive artificial intelligence and lifelong learners.
\end{abstract}

\maketitle


\section{Introduction}

Recent advances in artificial intelligence (AI)~\cite{lecun2015deep, jordan2015machine}, particularly in large language models, rely on scaling static architectures on massive fixed datasets \cite{Bommasani2021, OpenAI2023, Guo2025ds}. However, the deployment of these systems in real world reveals a critical limitation: the world is non-stationary, and data distributions evolve continuously over time. It is computationally prohibitive to retrain large-scale models from scratch to accommodate every distributional shift and new knowledge~\cite{Strubell2020}. Consequently, the paradigm of continual learning (CL)  emerges~\cite{Ditzler2015, Hadsell2020, Wang2024clreview}, where models learn sequentially from a stream of tasks. The central challenge of CL lies in the trade-off between \textit{stability} and \textit{plasticity}. Stability requires the model to keep proficiency on previously learned tasks, preventing catastrophic forgetting \cite{Goodfellow2015, Kirkpatrick2017, Zenke2017}, whereas plasticity demands the flexibility to efficiently learn new knowledge from fresh data streams. While the community has dedicated significant effort to the former challenge of stability, a more subtle problem has recently been identified regarding the latter, dubbed as loss of plasticity \cite{Dohare2024}. Recent studies demonstrate that standard deep learning methods not only forget; they also progressively lose their learnability, with the performance on subsequent tasks decaying continuously~\cite{Dohare2024, Chaudhry2018, Ash2020warm, Berariu2021, Nikishin2022rl, Abbas2023rlplasticity, Lyle2023platsicity, Elsayed2024}.

Recently, quantum machine learning (QML) has emerged as a promising alternative learning paradigm, leveraging quantum principles to process information in high-dimensional Hilbert spaces~\cite{Biamonte2017a}. Theoretical foundations have established the expressive power and generalization capabilities of quantum models~\cite{Lloyd2013, Schuld2014quest, Huang2020powerdata, Huang2022science, Cerezo2022ncs}. Specifically, variational quantum circuits (VQCs) and quantum neural networks (QNNs) have been successfully applied to diverse tasks, including classification~\cite{Havlicek2019nature, Prez-Salinas2020reupload}, generative modeling~\cite{Liu2018born, Dallaire-Demers2018qgan, Zhang2024diffusion}, and quantum-native data processing~\cite{Zhang2021vqnhe, Cong2019}. Furthermore, recent studies explore sophisticated architectures, optimization strategies, and utilize the intrinsic information properties to enhance QML performance and resilience~\cite{Li2017control, Beer2020dnn, Shen2020infor, Banchi2021qi, Li2022recent, Zheng2023fqnn, Jger2023nc, Miao2023nnvqa, Wu2024prl, Li2025vqnhe, Chen2025resilience, Cui2025}. In the context of continual learning, initial investigations focused primarily on the stability aspect of CL. Previous studies demonstrated that QML approaches based on variational quantum circuits (VQCs) cannot naturally mitigate catastrophic forgetting, exhibiting forgetting dynamics similar to classical counterparts under standard training settings~\cite{Jiang2022, Zhang2024clq}. However, the plasticity nature of quantum models remains unexplored. Therefore, it is an urgent and relevant question whether the unique structure of quantum neural networks can offer a decisive plasticity advantage in maintaining learnability over time. 
A positive answer would suggest that quantum computing offers an interesting mechanism toward adaptive intelligence, enabling systems that can adapt to dynamic environments without the huge cost of retraining or the risk of classical plasticity loss.

\begin{figure*}[t]\centering
	\includegraphics[width=\textwidth]{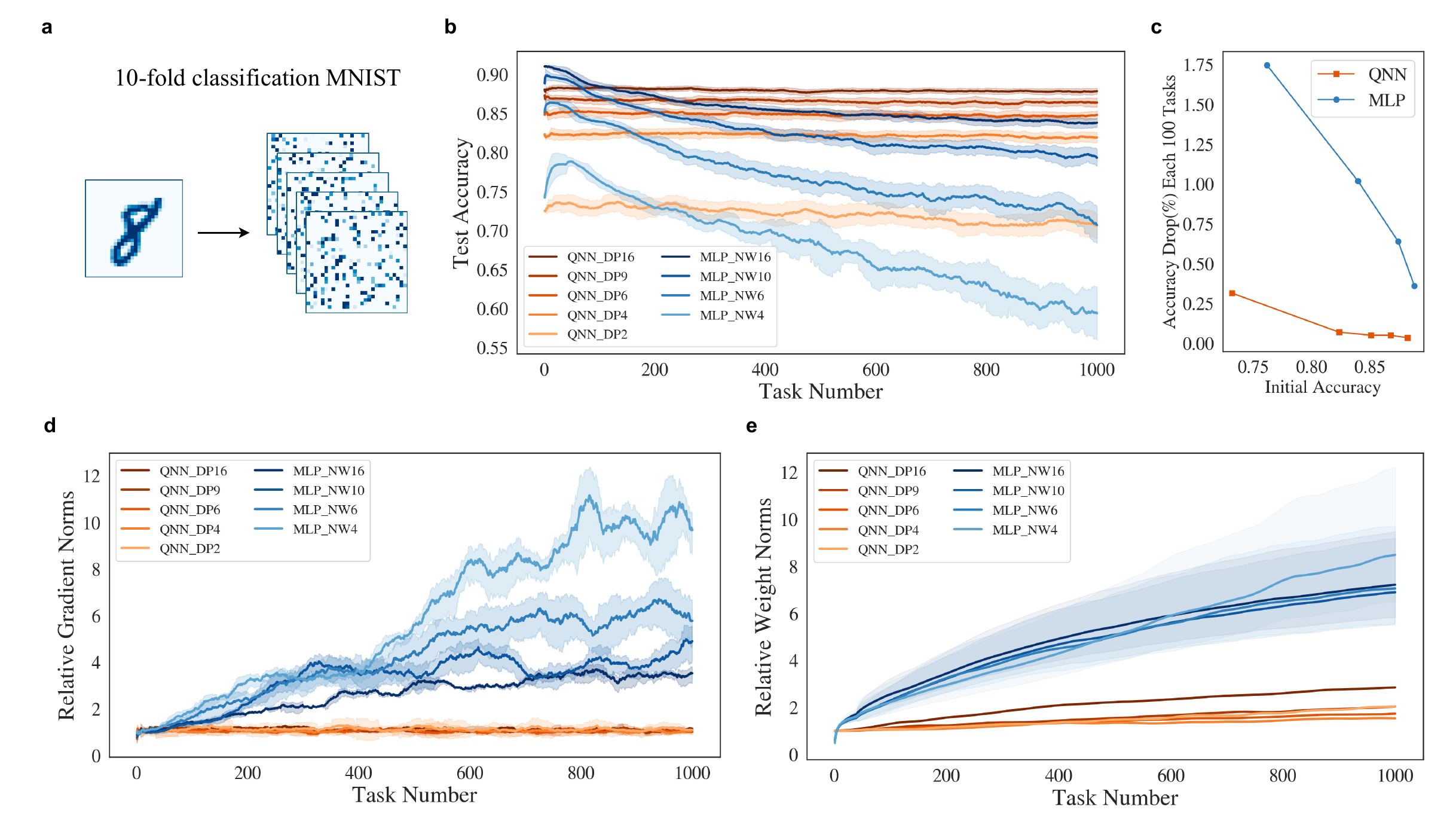}
	\caption{\textbf{Quantum neural networks maintain plasticity while classical neural networks fail in deep continual learning on permuted MNIST.} 
\textbf{a}, Illustration of the permuted MNIST task for continual learning. A sequence of 10-fold classification tasks is generated by applying different random permutations to the pixels of the MNIST digits, forcing the model to continuously adapt to new input distributions. 
\textbf{b}, Test accuracy as a function of the task number for QNNs of varying circuit depths (DP) and MLPs of varying network widths (NW). QNNs maintain a stable test accuracy across the whole task sequence, demonstrating robust plasticity. In contrast, all MLP variants exhibit a significant and continuous degradation in performance, indicating a severe loss of plasticity. 
\textbf{c}, Average accuracy drop against the model's initial accuracy averaged over the first 200 tasks. MLPs show a much higher rate of plasticity loss, and it is more severe for models with worse initial performance. 
 \textbf{d}, Relative gradient norms and \textbf{e}, relative weight norms during the continual learning process. The norms are normalized by the average norm across the first 10 tasks. For MLPs, both gradient and weight norms grow unboundedly over time, correlating with their performance decline. For QNNs, these norms remain stable and bounded, providing a mechanism explanation for the preserved plasticity. 
 All curves show the moving average over 40-task windows, with shaded areas representing the standard deviation of the underlying data within each window.}
\label{fig:mnist}
\end{figure*}

In this work, we address the critical open question through a systematic comparison of classical multi-layer perceptrons (MLPs) and deep quantum neural networks (QNNs), which are the representative models for classical and quantum AI, respectively. We evaluate their long-term adaptability across a variety of continual learning paradigms including supervised classification and reinforcement learning. The evaluation uses both industrial standard classical benchmarks and quantum-native datasets. Unlike the small-scale proof-of-concept demonstrations typical of prior QML research, our framework involves training very deep parameterized quantum circuits end-to-end across thousands of sequential tasks. This represents a computational workload orders of magnitude more challenging than standard benchmarks, validating QML plasticity in a more realistic and scalable regime. Our analysis reveals a qualitative difference in their learning behaviors. Classical models show loss of plasticity as expected with unbounded weight growth. On the contrary, as long as the quantum model is trainable, namely, capable of bypassing static trainability hurdles such as barren plateaus, its unitary character naturally preserves the model's plasticity, regardless of the training duration. We attribute this plasticity to the unitary nature of evolution that confines optimization to a compact parameter manifold. While the compactness of quantum parameter manifolds is a well-known physical property, establishing its nontrivial algorithmic consequence for mitigating plasticity loss constitutes a central conceptual contribution of this work. By preventing the saturation dynamics that paralyzes classical networks, our results establish a distinct structural advantage for QNNs. This shifts the focus from computational speed to inherent adaptability, identifying quantum machine learning as a promising ingredient for realizing truly adaptive, lifelong learning agents.


\begin{figure*}[t]\centering
	\includegraphics[width=\textwidth]{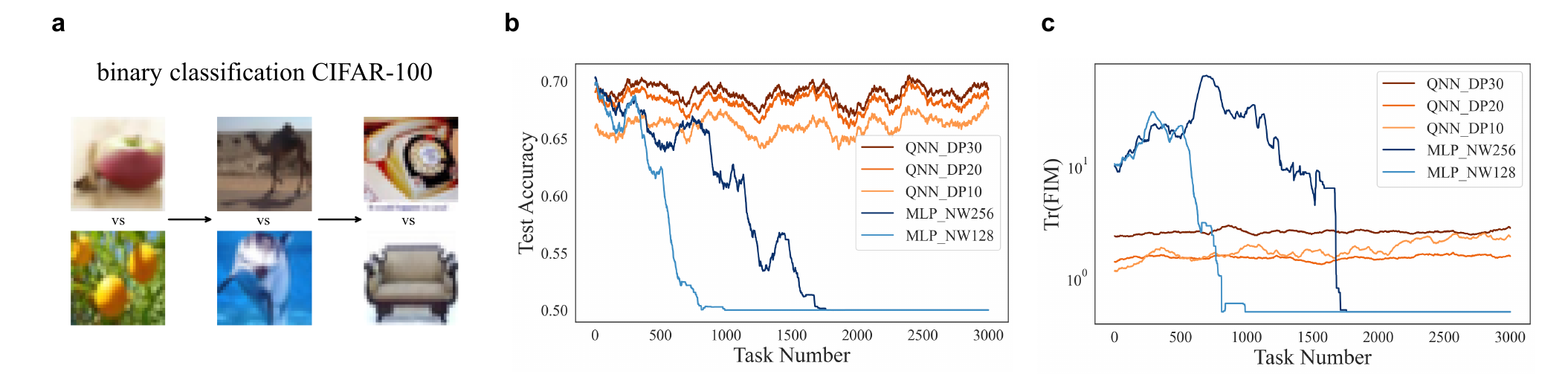}
	\caption{\textbf{Preserved plasticity of QNNs in continual learning of real-world image classification tasks.} 
\textbf{a}, Illustration of the sequential binary classification task constructed from the CIFAR-100 dataset. At each task, two classes are randomly sampled from the 100 available classes to form a new binary classification problem. This setting challenges the model's ability to continuously adapt to new visual concepts.
\textbf{b}, Test accuracy as a function of task number for QNNs of varying circuit depths (DP) and MLPs of varying network widths (NW). Similar to the results on permuted MNIST, QNNs maintain a stable test accuracy throughout the 3,000 tasks. In stark contrast, MLPs suffer a catastrophic loss of plasticity, with their performance rapidly decaying to guess level (50\% accuracy).
\textbf{c}, Mechanism insight by the trace of the Fisher Information Matrix, a measure of the model's effective learning capacity. $\text{Tr}(\text{FIM})$ for QNNs remains stable, indicating that their ability to learn is preserved. For MLPs, the $\text{Tr}(\text{FIM})$ collapses, quantitatively demonstrating that the model has lost its capacity to effectively update its parameters, which explains the performance decay in panel \textbf{b}. All curves show the moving average over 150-task windows.}
\label{fig:cifar}
\end{figure*}

\section{Preserved plasticity in supervised learning}

To systematically evaluate the plasticity of QNNs versus classical counterparts, we first deploy them on the standard continual learning benchmark of Permuted MNIST. We note that more advanced architectures, such as convolutional neural networks and ResNet \cite{He2015}, have already been shown to suffer from severe plasticity loss in this setting \cite{Dohare2024}. In our experimental setup, we generate a sequence of 1,000 tasks, where each task requires 10-fold classification on hand-written MNIST digits subject to a task-specific random pixel permutation (Fig.~\ref{fig:mnist}{\textbf{a}}). We compare a classical MLP with ReLU activation against a VQC constructed using $SU(4)$ entangling gates \cite{Dilip2022, Shen2024}. The MLP is configured with hidden widths (NW) ranging from 4 to 256 neurons, while the QNN utilizes varying circuit depths (DP) of 2 to 16 layers. For the QNN output, we employ a probability-based readout strategy, i.e., the logarithmic measurement probabilities of the first ten bitstrings are mapped to class logits for softmax function output, analogous to the classical output layer. Both models are trained using gradient-based optimization on each task's training set and evaluated on a held-out test set.

As shown in Fig.~\ref{fig:mnist}{\textbf{b}}, classical MLPs exhibit a continuous degradation in performance. Despite the sufficient learning capacity for each individual task, the test accuracy decays significantly as the task sequence progresses. On the contrary, QNNs maintain a stable accuracy throughout the entire CL sequence. Fig.~\ref{fig:mnist}{\textbf{c}} quantifies this difference: while MLPs suffer a large accuracy drop, QNNs exhibit negligible degradation. We verify that these divergent learning patterns persist across different model sizes and training configurations. 

To probe the underlying mechanism behind this divergence, we monitor the internal dynamics of the networks during CL. We compute the relative $L_2$ norms of the weight matrices ($||\theta||_2$) and the gradients ($||\nabla_\theta \mathcal{L}||_2$) at each task (Figs.~\ref{fig:mnist}{\textbf{d}}, \textbf{e}). In the classical case, we observe a monotonic, unbounded growth in weight magnitude. This weight explosion severely limits plasticity by pushing neurons into inactive regimes (e.g., dead ReLUs), leading to a corresponding collapse in learning capacity and effective rank. In direct contrast, the parameters of our $SU(4)$-based QNN are rotation angles on a compact Lie group manifold. This unitary constraint ensures that the parameter space remains bounded and periodic. Consequently, both the weight and gradient norms of the QNN are stable, preventing the saturation dynamics that paralyze classical ones.

To confirm that plasticity preservation is not unique to the $SU(4)$ quantum ansatz, we also test a restricted hardware-efficient ansatz \cite{Kandala2017} for QNN. We find identical plasticity preservation (see Appendix), confirming that the advantage is a property of the quantum model nature rather than a specific circuit architecture.

We extend our evaluation to a more challenging visual domain using the Split CIFAR-100 benchmark (Fig.~\ref{fig:cifar}\textbf{a}). We build a sequence of 3,000 binary classification tasks, where each task is to distinguish between two randomly sampled classes, such as apple versus orange. Inputs are encoded into quantum states via amplitude encoding after L2-normalization, and the output logit of the QNN is determined by the expectation value of a Pauli $Z$ operator on the last qubit. Notably, we utilize deep QNNs with up to 30 layers (more than 4,000 quantum trainable parameters) on this real-world dataset, demonstrating the scalability of the finding beyond typical toy-model experiments. 

Consistent with the MNIST results, standard MLPs suffer a catastrophic loss of plasticity, with accuracy collapsing toward random guessing after approximately 1,000 tasks (Fig.~\ref{fig:cifar}\textbf{b}). Crucially, for a fair comparison, both models are trained using exactly identical optimization hyperparameters and learning schedules, ensuring that the quantum advantage originates from the ansatz geometry rather than training hyperparameters. To further isolate the structural benefit of quantum models, we also evaluate an L2-regularized classical MLP with weight decay set to $10^{-4}$. Even with explicit regularization, the classical model still collapses to random guessing accuracy within the same timeframe. In stark contrast, QNNs demonstrate superior robustness, maintaining high classification accuracy throughout the extensive 3,000-task process. 

We also verify the robustness of this advantage on hardware-realistic noisy intermediate-scale quantum (NISQ) devices by simulating depolarizing noise ($p=5 \times 10^{-4}$). The noisy QNN maintains the learnability, confirming that the geometric preservation of plasticity is robust to small quantum errors.

We further train a classical MLP with a periodic sine activation function \cite{Sitzmann2020} on Split CIFAR-100. While this introduces bounded activations similar to quantum amplitudes, the weights themselves remain unbounded in Euclidean space. The Sin-MLP still suffers from severe plasticity loss and overfitting, as the unbounded weight growth drives the optimization landscape into high-frequency oscillation regimes (see Appendix). This confirms that true plasticity preservation requires the compact parameter manifold inherent to the quantum ansatz, not merely bounded output functions.

To provide a rigorous theoretical metric for learnability, we compute the trace of the Fisher information matrix (FIM) on a fixed probe dataset throughout the training (Fig.~\ref{fig:cifar}\textbf{c}) \cite{Achille2017}. While the FIM trace is an information-theoretic measure of a model's capacity, we specifically employ the empirical FIM in the main text, which represents the curvature of the loss surface with respect to the labels of the current batch. This metric serves as a direct proxy for the model's update agility. For MLPs, $\text{Tr}(\text{FIM})$ undergoes a rapid collapse, quantitatively demonstrating that the effective dimensionality of the optimization landscape shrinks over time. The classical network physically possesses parameters, but they become functionally dead. QNNs instead maintain a stable $\text{Tr}(\text{FIM})$, indicating that their effective learning capacity remains constant regardless of the time in the continual learning history. We provide a more fundamental analysis using the model FIM ($y$ from the model predicting instead of data ground truth in the empirical FIM case) and rank analysis in the Appendix, confirming that both metrics capture the same structural divergence in plasticity.

We further elevate these empirical findings to a rigorous theoretical analysis in Appendix. By analyzing the asymptotic geometry of the optimization landscape, we prove a fundamental difference in learnability between classical and quantum models. For classical networks trained under cross entropy, we derive that as weight magnitudes diverge in Euclidean space ($\lambda \to \infty$), the trace of the FIM collapses exponentially: $\lim_{\lambda \to \infty} \text{Tr}(\text{FIM}) \to 0$. In sharp contrast, for QNNs, we utilize the properties of Haar integration to prove that the expected learning capacity is strictly confined within a stable regime, bounded away from zero (preventing saturation) and also bounded from above (preventing extreme learning sensitivity as chaos). Consequently, the corresponding theorem implies that the quantum learning capacity is invariant to weight inflation, ensuring the trace of the FIM never decays due to the growing weights.

\section{Robustness in deep reinforcement learning}

\begin{figure}[t]\centering
	\includegraphics[width=0.49\textwidth]{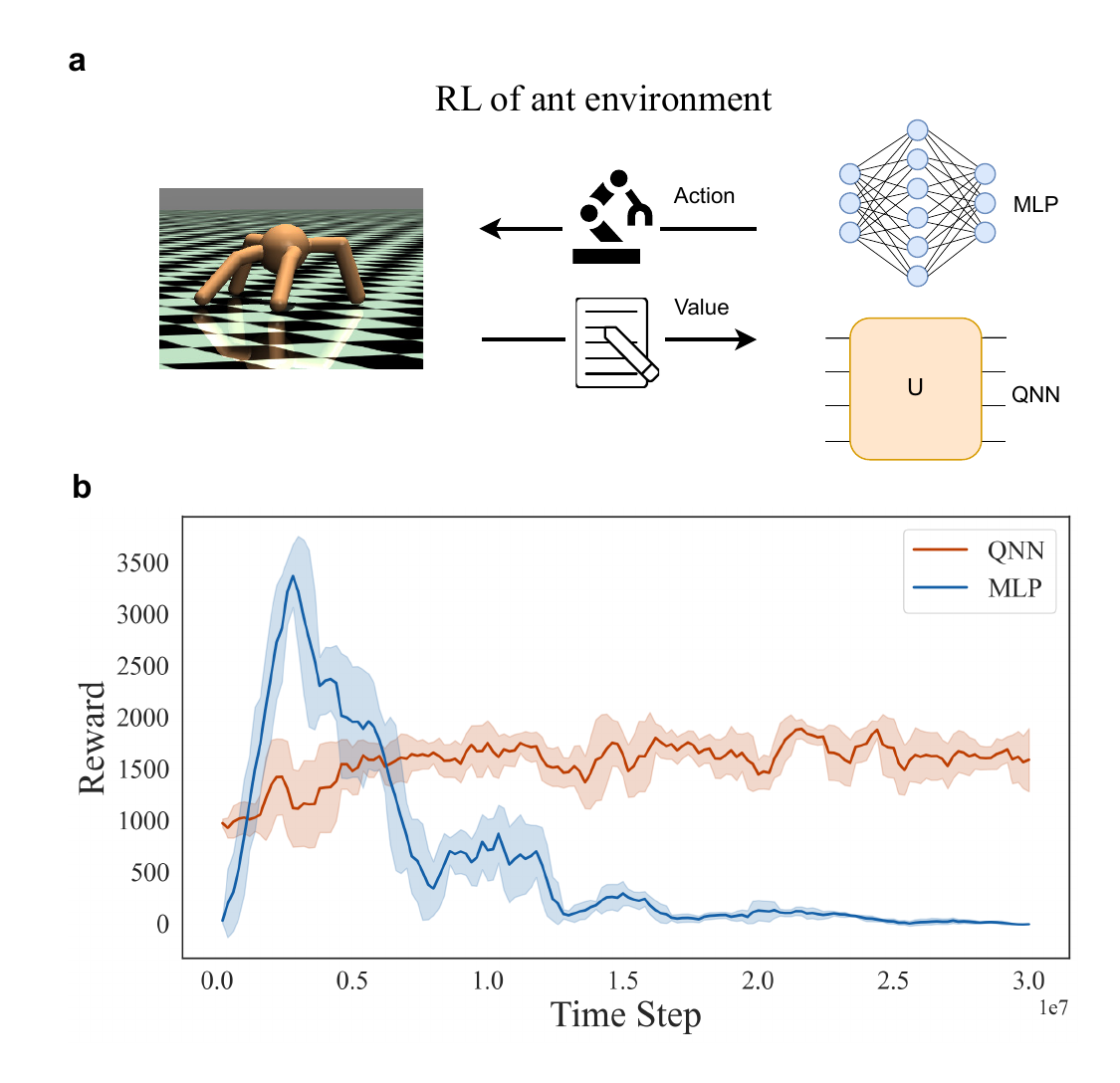}
\caption{\textbf{Quantum-enhanced agents maintain learning plasticity in a standard reinforcement learning benchmark.}
\textbf{a}, Schematic of the RL setup. An agent, using either a classical MLP or a quantum QNN as its function approximator for the policy and value networks, interacts with the Ant-v4 environment. The agent's goal is to learn a motion policy that maximizes reward for efficiently moving forward.
\textbf{b}, Evaluation reward (evaluated at each 200,000 steps) as a function of time step during training with PPO algorithm in a stationary environment. The agent with an MLP-based policy initially learns a strong policy, reaching a high reward. However, it subsequently suffers a catastrophic collapse in performance, demonstrating a severe loss of plasticity even without an explicit change of the task. Unlike the classical agent, the QNN-based agent learns more steadily and maintains a stable reward policy over the long term, showcasing its inherent ability to preserve plasticity in the dynamic context of RL. Curves show the moving average over 5-evaluation reward points window, with shaded areas representing the standard deviation of the underlying data within each window.}
\label{fig:rl}
\end{figure}

We next investigate whether the plasticity advantage extends to deep reinforcement learning (RL), where the data distribution shifts dynamically due to the agent's evolving policy. We employ the high-dimensional Ant-v4 environment (Fig.~\ref{fig:rl}\textbf{a}). In this continuous control task, the agent must coordinate four legs to maximize a composite reward signal, which incentivizes forward distance traveled and survival, while penalizing excessive control torques and contact forces.

We compare a standard PPO agent using MLP networks against a hybrid Quantum-PPO agent where the feature extractor is replaced by a 9-qubit, 12-layer quantum circuit with $SU(4)$ parameterized gates. The measurement probabilities of the quantum circuit are transformed by a scaled Tanh function to generate a dense feature vector, which is then projected via learnable linear heads to produce the stochastic action distribution and the scalar value estimate. Both agents are trained for 30 million timesteps across 16 parallel environments, demonstrating the remarkable scalability of our findings.

As shown in Fig.~\ref{fig:rl}\textbf{b}, the classical agent initially learns a high-performing policy but suffers a severe policy collapse in the later stages of training, a phenomenon identified as a symptom of plasticity loss in RL \cite{Nikishin2022rl, Abbas2023rlplasticity}. 
However, the quantum agent shows no such collapse. While its initial learning curve is more gradual, it maintains a stable, high-reward policy over much longer time steps. This result suggests that the intrinsic plasticity of quantum models acts as a natural regularizer against the instability of long-horizon RL, effectively stabilizing the policy optimization.

\section{Demonstration on quantum-native data}

\begin{figure}[t]\centering
	\includegraphics[width=0.5\textwidth]{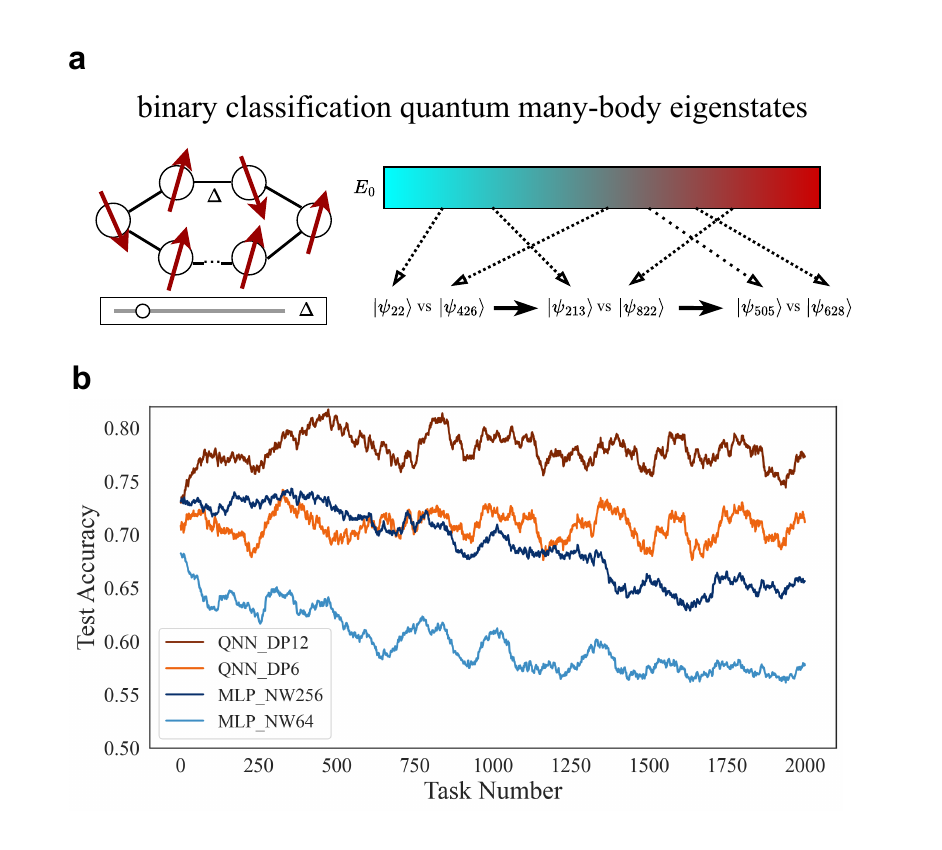}
	\caption{\textbf{QNNs exhibit superior plasticity on continual learning tasks with quantum native data.}
\textbf{a}, Schematic of the quantum native continual learning task. The dataset consists of many-body eigenstates from the one-dimensional XXZ Hamiltonian with periodic boundary conditions, generated by varying the anisotropy parameter $\Delta$. Each task in the continual learning sequence is a binary classification problem to distinguish between two sets of randomly sampled eigenstates from the full energy spectrum: the $i$-th eigenstate $|\psi_i\rangle$ and the $j$-th eigenstate $|\psi_j\rangle$.
\textbf{b}, Test accuracy as a function of task number for QNNs and MLPs of varying model sizes. QNNs, particularly the deeper variant (DP12), successfully learn and maintain a high classification accuracy across the sequence of 2,000 tasks. In contrast, the MLPs show a clear degradation in performance, indicating that the loss of plasticity is a fundamental issue for classical models even for data with inherent quantum structure. This result demonstrates the robust plasticity of QNNs in their native domain. Curves show the moving average over 100-task window.}
\label{fig:qdata}
\end{figure}

Finally, we analyze the performance on a quantum-native continual learning task to confirm the universality of the observed advantages. We generate a sequence of 2,000 binary classification tasks based on the eigenstates of a one-dimensional XXZ spin chain with varying anisotropy parameter $\Delta$ (Fig.~\ref{fig:qdata}\textbf{a}). For this task, the QNN naturally operates in the same Hilbert space as the data input, with predictions derived from the expectation value $\langle Z \rangle$ of the first qubit.

Fig.~\ref{fig:qdata}\textbf{b} reveals a profound result: classical MLPs still lose plasticity on the quantum dataset, which extends our findings to different data modalities. 
In contrast, QNNs demonstrate a dual advantage unique to the quantum-native regime. First, it learns better: the deeper QNN variant achieves significantly higher accuracy than the MLP, benefiting from the inductive bias alignment between the quantum model and the quantum data structure. Second, it learns longer: unlike the MLP, which degrades over time, the QNN preserves this high performance with zero degradation over 2,000 tasks. The synergy between superior representational power and indefinite plasticity positions QML as a unique powerful paradigm for processing varying quantum information streams.

\section{Discussion}

Our findings suggest that the loss of plasticity is a pervading failure mode in deep learning due to the training in unbounded Euclidean spaces. In classical networks, the continuous accumulation of weight updates drives the model into saturation regimes where the effective rank of the representation collapses. Our counterfactual experiments with Sin-MLPs reveal that simply bounding the activation function is insufficient for classical learning models to maintain plasticity. The quantum ansatz, by virtue of its unitary evolution and periodic parameterization on a compact manifold, naturally enforces the parameter space constraint. This nature ensures that the optimization landscape remains navigable indefinitely as evidenced by the stable test accuracy and Fisher information spectrum during CL. Analytically, this contrast is detailed in Appendix, where we define and rigorously prove the asymptotic limits of the Fisher information spectrum for both architectures. It is important to emphasize that this dynamic advantage is conditional on initial trainability. Once in a trainable regime, the capacity does not dynamically decay.

While it is well established that quantum parameters reside on a compact periodic manifold, discovering a practical problem that fundamentally benefits from this geometric property has remained elusive. Our central innovation lies in bridging this gap. We identify the loss of plasticity in continual learning as the exact real-world scenario where this quantum constraint transforms into a decisive advantage. Thus, we demonstrate that quantum unitarity is not merely a physical rule, but a structural solution to a fundamental pathology in artificial intelligence.

We remark that within the classical deep learning community, algorithmic interventions such as regenerative regularization \cite{Kumar2023reg}, new parameter injection \cite{Nikishin2023}, and selective weight resetting (e.g., continual backpropagation) \cite{Dohare2024} have been proposed to mitigate plasticity loss. However, these approaches represent extrinsic patches: they rely on heuristic modifications to the training objective or invasive surgeries on the network state to counteract the model's natural tendency toward rigidity. Such methods often introduce sensitive hyperparameters and can disrupt prior knowledge. The significance of our work lies in demonstrating that QNNs do not require these extrinsic interventions. Their plasticity is intrinsic, emerging directly from the QNN nature. This distinguishes the quantum approach as a first-principle structural solution rather than an engineering workaround, offering a more robust foundation for CL.

This insight suggests a redefinition of the strategic utility of QML. Historically, the field has been mainly focused on the search for computational speedups, typically quantified by asymptotic complexity. Our work instead identifies a distinct and perhaps more immediate axis of quantum advantage: \textit{learning robustness}. While QNNs may not always train faster than classical counterparts, they learn longer and more reliably in non-stationary environments. This finding complements recent research demonstrating quantum advantages in \textit{learning resilience} against data noise and unlearning efficiency \cite{Chen2025resilience}, collectively suggesting that the geometric properties of quantum learning models offer a superior basis for adaptability. With learning robustness and resilience revealed, the evaluation on QML shifts the metric from mere \textit{time-to-solution} to \textit{quality-of-solution} over the model's entire lifecycle. This distinction is critical for real-time control systems and sustainable AI, where the train-reset-retrain cycle is ecologically prohibitive. Furthermore, on quantum-native tasks, we observe a dual advantage in terms of superior representational alignment and indefinite plasticity, which make QML the viable path for processing continuous streams of quantum information without degradation.

We conclude by noting the limitations and future directions of this research. Our study is conducted via exact simulations using {\sf TensorCircuit-NG} \cite{Zhang2022tc}. The effect of quantum noise on continual quantum learning is worth further investigation.
Besides, it is also interesting to explore whether we can design efficient quantum-inspired classical learning gadgets capable of maintaining plasticity like quantum models.
Finally, we emphasize that while the quantum approach intrinsically preserves plasticity, it does not inherently solve the complementary challenge of stability. Both quantum and classical models in our study exhibit catastrophic forgetting of previous tasks when trained without explicit stability-preserving regularizers. A complete continual learning system would require augmenting the plastic quantum backbone with specialized stabilization techniques or memory mechanisms. This hybrid approach represents a promising frontier for realizing better AI that is both sufficiently stable to remember and plastic enough to evolve.

\section*{Acknowledgments}
This work was supported by Quantum Science
and Technology-National Science and Technology Major Project (No. 2024ZD0301700), the National Natural Science Foundation of China (Nos. 12504599 and 12574546), and NSAF (No. U2330401).

\section*{Data Availability}
The data that support the findings of this article are openly available at \url{https://github.com/sxzgroup/quantum-plasticity}.

\appendix

\section{Methods and Pipelines}

\subsection{Experiment 1: Supervised Continual Learning on Permuted MNIST}

\textbf{Task Formulation.} We simulate a domain change learning scenario using the standard MNIST dataset. The experiment consists of a sequence of $T=1,000$ independent classification tasks. For each task $t$, a fixed random permutation matrix $P_t$ is generated. The raw input images (28$\times$28 pixels) are flattened into vectors $\mathbf{x} \in \mathbb{R}^{784}$. For task $t$, the input provided to the model is $\mathbf{x}'_t = P_t \mathbf{x}$. The 10-class target labels $y \in \{0, \dots, 9\}$ remain unchanged.

\textbf{Data Preprocessing for Quantum Models.}
The inputs are zero-padded to map the 784-dimensional vector to a $N$-qubit Hilbert space ($d=2^N$). We utilize $N=10$ qubits ($d=1024$). The 784-pixel vector is padded with zeros to form a vector of size 1024. This vector is then $L_2$-normalized to unit magnitude to serve as a valid quantum state vector for amplitude encoding.

\textbf{Model Architectures.}
\begin{itemize}
    \item \textbf{Classical MLP:} The network consists of an input layer ($d=784$), a single hidden Dense layer with ReLU activation, and a Dense output layer with 10 units using Softmax activation head. To test the impact of capacity, we vary the width of hidden layer  (Network Width, NW) across $\{4, 6, 10, 16\}$ neurons.
    \item \textbf{Deep QNN:} We employ a VQC on $N=10$ qubits (Fig.~\ref{fig:circuit}). The input is loaded via amplitude encoding. The variational ansatz includes $L$ layers (Depth, DP) blocks. Each layer comprises general two-qubit $SU(4)$ unitary gates applied to alternating pairs of neighboring qubits (even-odd pairs followed by odd-even pairs in brickwall pattern). We vary the circuit depth $\text{DP} \in \{2, 4, 6, 9, 16\}$ to evaluate scalability.
    \item \textbf{QNN Readout:} A computational basis measurement is performed on all 10 qubits. The logarithmic probabilities of the first 10 bitstrings are interpreted as logits. These logits are processed by a softmax activation and fed into the cross-entropy loss.
\end{itemize}

To maximize local expressivity and avoid architectural bottlenecks, the fundamental building block of our variational ansatz is the general parameterized two-qubit gate, $V \in SU(4)$. Unlike hardware-efficient ansatzes that rely on restricted gate sets (e.g., $R_y, R_z, CNOT$), a general two-qubit unitary possesses 15 degrees of freedom (ignoring global phase), allowing it to reach any state in the two-qubit Hilbert space.

We parameterize this gate using the canonical exponential map of the $\mathfrak{su}(4)$ Lie algebra. Let $\sigma_0 = I$ and $\{\sigma_1, \sigma_2, \sigma_3\} = \{X, Y, Z\}$ denote the standard Pauli basis. The gate $V(\boldsymbol{\theta})$ is defined by 15 trainable real parameters $\boldsymbol{\theta}$ as \cite{Dilip2022}:
\begin{equation}
    V(\boldsymbol{\theta}) = \exp\left( -\frac{i}{2} \sum_{\substack{\alpha, \beta \in \{0,1,2,3\} \\ (\alpha, \beta) \neq (0,0)}} \theta_{\alpha,\beta} \, (\sigma_\alpha \otimes \sigma_\beta) \right).
\end{equation}
The summation runs over all 15 non-identity tensor products of Pauli matrices. By arranging these gates in a dense brickwall or ladder topology, the network creates a highly entangled state with a parameter manifold that is both compact and maximally expressive for its depth.

\textbf{Training Protocol.} All models are trained using the Adam optimizer. The batch size is 128 for all models.

\begin{itemize}
    \item \textbf{Classical:} Learning rate $0.001$, trained for 1 epoch per task.
    \item \textbf{Quantum:} Learning rate $0.03$ trained for 5 epochs per task.
\end{itemize}

\textbf{Plasticity Metrics.} We also evaluate the average $L_2$ norm of the weight matrices (excluding biases for MLPs) and the average $L_2$ norm of the gradients computed on a fixed batch of data during training for each task.

\subsection{Experiment 2: Supervised Continual Learning on Split CIFAR-100}

\textbf{Task Formulation.} We construct a sequence of $T=3,000$ binary classification tasks using the CIFAR-100 dataset. For each task, two distinct classes are sampled uniformly at random from the 100 available classes. The training and test datasets are filtered to include only samples from these two classes, and labels are remapped to $\{0, 1\}$.

\textbf{Data Preprocessing.}
The $32\times32\times3$ RGB images are firstly converted to grayscale ($32\times32\times1$) and then flattened into vectors of dimension $d=1024$ for both quantum and classical models.
For both classical and quantum models, pixel values are $L_2$-normalized to unit magnitude. This ensures that the classical model operates on the same geometric constraints as the quantum model for the input.

\textbf{Model Architectures.}
\begin{itemize}
    \item \textbf{Classical MLP:} A deeper architecture is built for this visual task. It consists of an input layer ($d=1024$), a first hidden layer with 256 ReLU neurons, a second hidden layer with varied ReLU neurons as $\text{NW} \in \{128, 256\}$, and a linear Dense output head for logits.
    \item \textbf{Deep QNN:} We use the same $N=10$ qubit system as in Experiment 1, but with significantly increased circuit depths to handle real-world image complexity. We test circuit depths $\text{DP} \in \{10, 20, 30\}$.
    \item \textbf{QNN Readout:} We measure the expectation value of the Pauli-$Z$ operator on the last qubit as the logit, which is further used for classification loss. 
\end{itemize}

\textbf{Training Protocol.} Both models are trained using the Adam optimizer with a fixed learning rate of $\eta=0.001$ for 5 epochs per task. Batch size is 128.

\textbf{Fisher Information Analysis.} To quantify the effective learnability, we compute the trace of the FIM. To ensure temporal consistency, we define a fixed probe dataset $D_{\text{probe}}$ comprising 64 randomly sampled images from the first task. The FIM trace is approximated as the expected squared $L_2$ norm of the score function (log-likelihood gradient) \cite{Achille2017}:
\begin{equation}
\text{Tr}(F(\boldsymbol{\theta}_t)) \approx \frac{1}{|D_{\text{probe}}|} \sum_{(\mathbf{x}, y) \in D_{probe}} \left\| \nabla_{\boldsymbol{\theta}} \log p(y|\mathbf{x}; \boldsymbol{\theta}_t) \right\|_2^2
\end{equation}

\begin{figure}[t]\centering
	\includegraphics[width=0.5\textwidth]{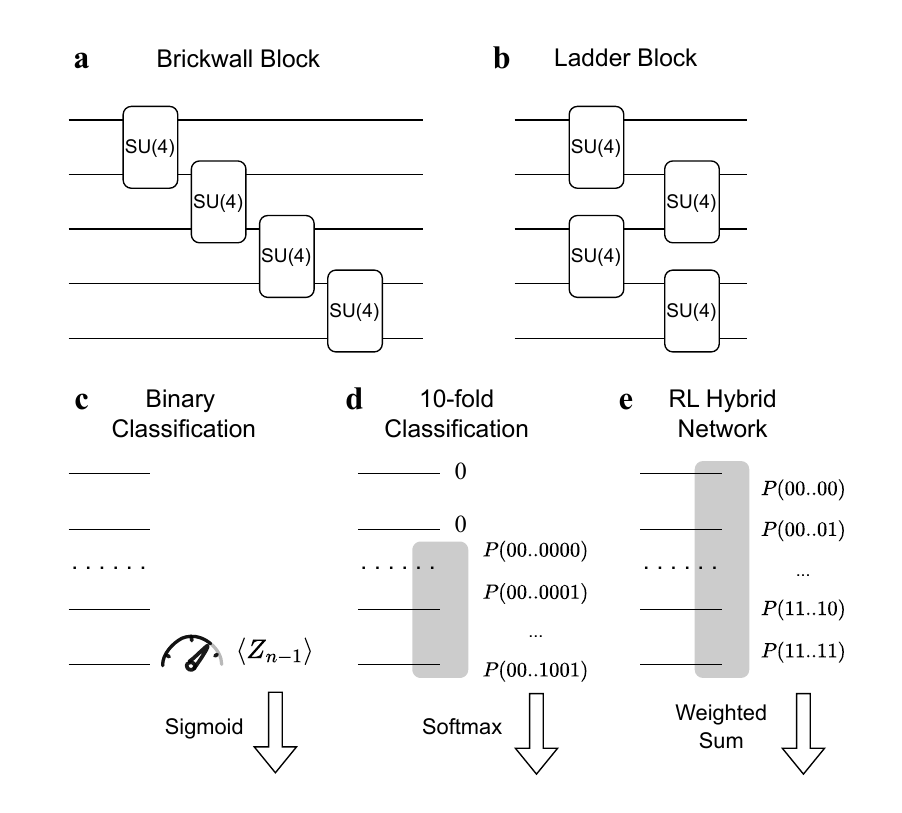}
	\caption{\textbf{Quantum neural network architectures and readout strategies employed in different experimental settings.} 
\textbf{a}, \textbf{b}, Schematics of the variational ansatz blocks used to construct the deep QNNs. Each block is composed of parameterized two-qubit general unitary gates, $SU(4)$, acting on nearest neighbor qubits. 
\textbf{a}, The \textbf{Brickwall Block} architecture applies gates to alternating pairs of qubits.
\textbf{b}, The \textbf{Ladder Block} architecture applies the same parameterized $SU(4)$ gates in a sequential descending pattern.
\textbf{c}--\textbf{e}, Task-specific readout protocols mapping quantum states to classical predictions. 
\textbf{c}, \textbf{Binary Classification Readout} (used for Split CIFAR-100 and Quantum-Native tasks): The expectation value $\langle Z \rangle$ of the given qubit is measured and passed through a sigmoid activation to yield a binary class probability. 
\textbf{d}, \textbf{10-class Classification Readout} (used for Permuted MNIST): The probabilities of the first 10 computational basis states ($P(...0000), \dots, P(....1001)$) are extracted from the output distribution and normalized via a Softmax function to produce the categorical prediction. 
\textbf{e}, \textbf{Hybrid RL Readout} (used for RL): The full probability distribution over all $2^N$ basis states serves as a dense feature vector. This vector is processed via a learnable linear projection (weighted sum) to generate the continuous policy actions and scalar value estimates, respectively.}
\label{fig:circuit}
\end{figure}

\subsection{Experiment 3: Deep Reinforcement Learning}

\textbf{Environment.} We employ the {\sf Ant-v4} environment from the {\sf MuJoCo} physics engine via the {\sf Gymnasium} library. 
The observation space is 27-dimensional, comprising joint angles, joint velocities, and center-of-mass coordinates (excluding external contact forces).

\textbf{Reward Function.} The agent is trained to maximize a composite reward signal $R$ calculated at each timestep:
\begin{equation}
R = \underbrace{v_x}_{\text{Forward}} + \underbrace{1.0}_{\text{Healthy}} - \underbrace{0.5 \|\mathbf{a}\|^2}_{\text{Control Cost}} - \underbrace{5\times 10^{-4} \|\mathbf{F}_{clip}\|^2}_{\text{Contact Cost}}
\end{equation}
where $v_x$ is the forward velocity, $1.0$ is a survival bonus if the ant does not fall, $\mathbf{a}$ is the action vector (control torque), and $\mathbf{F}_{clip}$ represents the external contact forces clipped to the range $[-1, 1]$. Note that while contact forces are excluded from the agent's input observation (to test plasticity on kinematic features), they are actively penalized in the reward to encourage efficient locomotion.

\textbf{Data Preprocessing for Quantum Models.} The 27-dimensional observation vector is zero-padded to 9-qubit dimension $d=512$ ($2^9$) and $L_2$-normalized to be compatible with quantum amplitude encoding.

\textbf{Agent Architectures.} We utilize the PPO algorithm with separate policy ($\pi$) and value ($V$) networks.
\begin{itemize}
    \item \textbf{Classical Agent:} The policy network comprises two hidden Dense layers with 1024 and 256 units (ReLU activation), followed by a linear action head of 8 outputs. The value network comprises two hidden layers with 256 units each.
    \item \textbf{Hybrid Quantum Agent:} The observation vector from the environment is first fed into a fixed-structure QNN acting on $N=9$ qubits. The ansatz consists of 12 layers of $SU(4)$ gates in ladder layout.
    \item \textbf{Quantum-Classical Interface:} To map the quantum state to classical policy features, we measure the probability distribution $p(\mathbf{z})$ of the basis states. These probabilities are processed by a scaled nonlinear activation function $h(\mathbf{z}) = \tanh(5.0 \; p(\mathbf{z}))$ to generate a dense feature vector. This vector is then projected via learnable linear heads (with no additional classical layers) to produce the stochastic action distribution vector and scalar value estimate.
\end{itemize}

\textbf{Training Protocol.} Agents are trained for a total of 30 million timesteps using 16 parallel environments. We use a rollout buffer of 2048 steps, a batch size of 128, and 10 optimization epochs per update. The learning rate is set to $10^{-4}$ for the classical agent and $0.002$ for the quantum agent. 

\subsection{Experiment 4: Continual Learning on Quantum Data}

\textbf{Task Formulation.} We generate a dataset of many-body quantum states derived from the one-dimensional Heisenberg XXZ Hamiltonian on a ring of $L=10$ spins:
\begin{equation}
H = \sum_{i=1}^{L} (X_i X_{i+1} + Y_i Y_{i+1} + \Delta Z_i Z_{i+1})
\end{equation}
Periodic boundary conditions are assumed. We vary the anisotropy parameter $\Delta$ from $-2.0$ to $2$ in steps of $0.02$. For each $\Delta$, we compute the full eigenspectrum. We construct a sequence of 2,000 tasks; each task involved binary classification between two distinct eigenstates $|\psi_i\rangle$ and $|\psi_j\rangle$ sampled uniformly from the spectrum. In other words, each task consists of 200 data points, each data point is a many-body wavefunction from either the $i$-th (label 0) or $j$-th (label 1) eigenstates of Hamiltonian with certain $\Delta$.

\textbf{Input Representations.}
\begin{itemize}
    \item \textbf{Classical Input:} Since standard MLPs cannot process complex amplitudes directly, the state vector $|\psi\rangle \in \mathbb{C}^{1024}$ is decomposed. The real and imaginary parts are concatenated to form a real-valued input vector $\mathbf{x} \in \mathbb{R}^{2048}$.
    \item \textbf{Quantum Input:} The state $|\psi\rangle$ is loaded directly into the quantum register as the initial wavefunction, employing the native Hilbert space structure.
\end{itemize}

\textbf{Model Architectures.}
\begin{itemize}
    \item \textbf{Classical MLP:} A feed-forward network with two hidden layers of $NW$ neurons each (ReLU activation) and a single unit output with Sigmoid activation. We test $NW=64, 256$.
    \item \textbf{Quantum Classifier:} A VQC with $DP$ blocks of $SU(4)$ two-qubit gates in one dimensional ladder pattern on 10 qubits. We test $DP=6, 12$. The prediction logit is obtained by measuring the expectation value $\langle Z_0 \rangle$ of the first qubit which further goes through the sigmoid activation in the binary cross entropy loss.
\end{itemize}

\textbf{Training Protocol.} Both models are trained for 5 epochs per task with a batch size of 64 and a learning rate of $0.002$ using the Adam optimizer.

\section{Analytical Derivation of Fisher Information Spectrum}

In this section, we derive the asymptotic behavior of the FIM for classical and quantum neural networks trained with the binary cross entropy (BCE) loss. We show that for classical networks (regardless of activation function), the unbounded growth of weight magnitudes leads to a collapse of the FIM trace, corresponding to a loss of plasticity. In contrast, we prove that the unitary nature of QNN guarantees a stable, non-vanishing FIM spectrum.

\subsection{General Form of FIM for Diverse Activation and Loss Functions}

Consider a classification problem with input $\mathbf{x}$ and target $y$. Let the model produce a logit vector $\mathbf{f}(\mathbf{x}; \boldsymbol{\theta})$, which is transformed into a predictive distribution $p_j = \text{softmax}(\mathbf{f})_j$ (for multi-class) or $p = \sigma(f)$ (for binary). 

By definition, the model Fisher Information Matrix is the expected outer product of the log-likelihood gradients: $F(\boldsymbol{\theta}) = \mathbb{E}_{\mathbf{x}, y \sim p_\theta}[\nabla_{\boldsymbol{\theta}} \mathcal{L} \nabla_{\boldsymbol{\theta}} \mathcal{L}^T]$. Applying the chain rule through the pre-activation logits $\mathbf{f}(\mathbf{x}; \boldsymbol{\theta})$, and exploiting the equivalence between the FIM and the Generalized Gauss-Newton matrix, the expected second-order derivative of the model dynamics strictly vanishes under the model's own predictive distribution. This mathematically decouples the FIM into two distinct geometric components: the curvature of the output distribution and the Jacobian of the parameter manifold.

Consequently, the FIM reduces to the expectation over the data distribution:
\begin{equation}
    F(\boldsymbol{\theta}) = \mathbb{E}_{\mathbf{x}} \left[ \sum_{i,j} H_{ij} \nabla_{\boldsymbol{\theta}} f_i(\mathbf{x}) \nabla_{\boldsymbol{\theta}} f_j(\mathbf{x})^T \right],
    \label{eq:general_fim_matrix}
\end{equation}
where matrix $H_{ij} = \frac{\partial^2 \mathcal{L}}{\partial f_i \partial f_j}$. 
For the binary case (Sigmoid), $H$ reduces to the scalar $\xi = p(1-p)$. For the multi-class case (Softmax), the components are $H_{ij} = p_i \delta_{ij} - p_i p_j$, where $p_i$ is the probability of class $i$.
In both cases, $\text{Tr}(F)$ represents the total sensitivity of the model's output distribution to parameter perturbations.

\subsection{Plasticity Collapse in Classical Networks}

We analyze the behavior of Eq. \eqref{eq:general_fim_matrix} under the condition of weight growth, a phenomenon ubiquitously observed in continual learning as shown in main text.

\begin{proposition}[Asymptotic Saturation]
Consider a classical neural network $f(\mathbf{x}; \mathbf{w}) = \mathbf{w}_{out}^T \mathbf{h}(\mathbf{x}; \mathbf{w}_{int})$, where $\mathbf{w}_{out}$ are the readout weights and $\mathbf{h}$ is the latent space representation. If the magnitude of the weights grows unboundedly, i.e., $||\mathbf{w}|| \to \infty$, then $\text{Tr}(F(\mathbf{w})) \to 0$.
\end{proposition}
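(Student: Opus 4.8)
The plan is to start from the trace of Eq.~\eqref{eq:general_fim},
\begin{equation}
\text{Tr}(F(\mathbf{w})) = \mathbb{E}_{\mathbf{x}\sim q}\!\left[\, \xi(\mathbf{x};\mathbf{w})\,\bigl\| \nabla_{\mathbf{w}} f(\mathbf{x};\mathbf{w}) \bigr\|_2^2 \,\right], \qquad \xi(\mathbf{x};\mathbf{w}) := \sigma\!\bigl(f(\mathbf{x};\mathbf{w})\bigr)\bigl(1-\sigma\bigl(f(\mathbf{x};\mathbf{w})\bigr)\bigr),
\end{equation}
and to exhibit the collapse as a competition between two effects: the stiffness factor $\xi$ decays \emph{exponentially} in the logit magnitude $|f|$, while the sensitivity factor $\bigl\|\nabla_{\mathbf{w}} f\bigr\|_2^2$ can grow only \emph{polynomially} in $\|\mathbf{w}\|$. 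To make ``$\|\mathbf{w}\|\to\infty$'' concrete I would parametrize the blow-up as $\mathbf{w}=\lambda\hat{\mathbf{u}}$ with a fixed unit vector $\hat{\mathbf{u}}$ and $\lambda\to\infty$; any divergent path in weight space reduces to this after passing to a subsequence of radii with a limiting direction.

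The argument then rests on three elementary ingredients. First, \emph{exponential saturation of the stiffness}: the inequality $\sigma(z)\bigl(1-\sigma(z)\bigr)\le e^{-|z|}$, valid for all $z\in\mathbb{R}$, gives $\xi(\mathbf{x};\mathbf{w})\le e^{-|f(\mathbf{x};\mathbf{w})|}$. Second, \emph{divergence of the logit}: for a network with positively $1$-homogeneous activation (e.g.\ ReLU), homogeneity gives $f(\mathbf{x};\lambda\hat{\mathbf{u}})=\lambda^{L}g(\mathbf{x})$ with $g(\mathbf{x}):=f(\mathbf{x};\hat{\mathbf{u}})$ and an integer $L\ge1$ (the number of scaled weight layers the signal traverses; $L=1$ if only the readout block $\mathbf{w}_{out}$ is scaled, so that $g(\mathbf{x})=\hat{\mathbf{u}}_{out}^{T}\mathbf{h}(\mathbf{x};\mathbf{w}_{int})$), hence $|f(\mathbf{x};\lambda\hat{\mathbf{u}})|=\lambda^{L}|g(\mathbf{x})|\to\infty$ for every $\mathbf{x}$ with $g(\mathbf{x})\neq0$. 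Third, \emph{polynomial growth of the gradient}: the same homogeneity makes every partial derivative scale as $\lambda^{L-1}$, so $\bigl\|\nabla_{\mathbf{w}} f(\mathbf{x};\lambda\hat{\mathbf{u}})\bigr\|_2=\lambda^{L-1}r(\mathbf{x})$ with $r(\mathbf{x}):=\bigl\|\nabla_{\mathbf{w}} f(\mathbf{x};\hat{\mathbf{u}})\bigr\|_2$ (for bounded smooth activations this factor is in fact $O(1)$ in $\lambda$, so ReLU is the worst case).

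Combining these yields the pointwise estimate
\begin{equation}
\xi(\mathbf{x};\lambda\hat{\mathbf{u}})\,\bigl\|\nabla_{\mathbf{w}} f(\mathbf{x};\lambda\hat{\mathbf{u}})\bigr\|_2^2 \;\le\; r(\mathbf{x})^2\,\lambda^{2(L-1)}\,e^{-\lambda^{L}|g(\mathbf{x})|}\;\xrightarrow{\;\lambda\to\infty\;}\;0 \qquad\text{whenever } g(\mathbf{x})\neq0,
\end{equation}
since the exponential beats the polynomial. The remaining step is to pass this to the expectation. When $q$ is the empirical distribution on a finite probe set $D_{\text{probe}}$ — the case actually used in the Methods — one has $g(\mathbf{x})\neq0$ for every $\mathbf{x}\in D_{\text{probe}}$ generically, hence $g_{\min}:=\min_{\mathbf{x}}|g(\mathbf{x})|>0$ and the bound becomes $\le\bigl(\max_{\mathbf{x}}r(\mathbf{x})^2\bigr)\lambda^{2(L-1)}e^{-\lambda^{L}g_{\min}}\to0$ \emph{uniformly}, so the finite average vanishes and $\text{Tr}(F(\mathbf{w}))\to0$. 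For a continuous $q$ I would instead dominate by $\sup_{\lambda\ge\lambda_0}\bigl[\lambda^{2(L-1)}e^{-\lambda^{L}|g(\mathbf{x})|}\bigr]r(\mathbf{x})^2\le C\,r(\mathbf{x})^2\,|g(\mathbf{x})|^{-2(L-1)/L}$ and invoke dominated convergence under the mild non-degeneracy hypothesis $\mathbb{E}_{\mathbf{x}}\bigl[r(\mathbf{x})^2|g(\mathbf{x})|^{-2(L-1)/L}\bigr]<\infty$, i.e.\ no pathological concentration of the data on the decision surface $\{g=0\}$.

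The hard part will be exactly this interchange of limit and expectation near $\{g(\mathbf{x})=0\}$: there the stiffness does not saturate while the gradient norm still inflates, so the integrand stays bounded but not uniformly in $\mathbf{x}$, and some non-degeneracy assumption on $q$ (vacuous for finite probe sets, but unavoidable in general) must be imposed. A secondary technicality is that the clean homogeneity scaling used above presumes positively homogeneous activations; upgrading to the ``regardless of activation'' statement requires replacing it by a coercivity-type argument — that along any divergent weight path the readout pre-activation grows while its gradient grows sub-exponentially — which is the one place where the general-activation version of the proof genuinely needs care.
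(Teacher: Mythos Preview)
Your proposal is correct and follows essentially the same approach as the paper: parametrize the weight blow-up by a scale $\lambda$, use that the curvature factor $\xi=\sigma(f)(1-\sigma(f))$ decays exponentially in $|f|\sim\lambda^{L}$ while $\|\nabla_{\mathbf{w}}f\|^{2}$ grows only polynomially in $\lambda$, and let the exponential win. Your version is in fact more careful than the paper's heuristic argument, which neither bounds $\xi$ by $e^{-|f|}$ explicitly nor addresses the interchange of limit and expectation or the non-degeneracy issue near $\{g=0\}$ that you correctly flag.
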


In continual learning, to minimize loss on past data, the optimizer increases the magnitude of the weight vector. Let us parameterize this growth by a scaling factor $\lambda$ as $\mathbf{w} \to \lambda \mathbf{w}_0$ when $\lambda \to \infty$. 

\textbf{Case 1: Sigmoid Binary Head.}
The logit output scales as $f_\lambda(\mathbf{x}) \approx \lambda f_0(\mathbf{x})$. The curvature term $\xi(\lambda) = \sigma(\lambda f_0) (1 - \sigma(\lambda f_0))$ decays exponentially: $\lim_{\lambda \to \infty} \xi(\lambda) \approx e^{-\lambda |f_0|}$. Since the gradients $\nabla_{\mathbf{w}} f$ scale at most polynomially with $\lambda$ (e.g., linearly for ReLU or stay bounded for bounded activations), the product $\xi(\lambda) ||\nabla f||^2$ vanishes as $\lambda \to \infty$.

\textbf{Case 2: Softmax Multi-class Head.}
In classical classification architectures, regardless of the internal hidden activations (e.g., ReLU, Tanh, or periodic Sine), the final mapping to the unnormalized logits is a linear readout layer: $f_i(\mathbf{x}) = \mathbf{w}_{out, i}^T \mathbf{h}(\mathbf{x})$, where $\mathbf{h}(\mathbf{x})$ is the representation from the penultimate layer. In continual learning, unbounded parameter growth implies $||\mathbf{w}_{out}|| \to \infty$. Let us parameterize the divergent readout weights as $\mathbf{w}_{out}(\lambda) = \lambda \mathbf{w}_{out, 0}$ for $\lambda \to \infty$. Even if $\mathbf{h}(\mathbf{x})$ is strictly bounded by specific activation functions, the output logits scale linearly: $f_{\lambda, i}(\mathbf{x}) \approx \lambda f_{0, i}(\mathbf{x})$. For any given input $\mathbf{x}$, let $k = \text{argmax}_j f_{0, j}$ be the strictly dominant class, and let $\Delta_{min} = \min_{j \neq k} (f_{0, k} - f_{0, j}) > 0$ be the minimum margin in the initial logit space. As $\lambda \to \infty$, the probability of the dominant class $p_k(\lambda)$ approaches $1$, and any non-dominant class $j \neq k$ decays exponentially:\begin{equation}p_j(\lambda) = \frac{e^{\lambda f_{0, j}}}{\sum_i e^{\lambda f_{0, i}}} \le e^{-\lambda (f_{0, k} - f_{0, j})} \le e^{-\lambda \Delta_{min}}.\end{equation}Consequently, every element in the stiffness matrix $H_{ij}(\lambda) = p_i \delta_{ij} - p_i p_j$ becomes bounded by this exponential decay. Specifically, $H_{kk} = p_k(1-p_k) \approx \mathcal{O}(e^{-\lambda \Delta_{min}})$, and $H_{jj} \approx \mathcal{O}(e^{-\lambda \Delta_{min}})$. Therefore, the matrix norm scales as:\begin{equation}||H(\lambda)|| \sim \mathcal{O}(e^{-\lambda \Delta_{min}}).\end{equation}

Crucially, the ultimate FIM trace is the product of this stiffness matrix $H$ and the squared Jacobian $||\nabla_{\mathbf{w}} \mathbf{f}||^2$. By the chain rule, the gradient norm with respect to the network weights grows at most polynomially. For an $L$-layer network, this growth is bounded by $\mathcal{O}(\lambda^{2L-2})$, regardless of whether the internal activations are unbounded (e.g., ReLU) or strictly bounded (e.g., Sine or Tanh), because the gradients are inevitably multiplied by the divergent subsequent weight matrices. Combining these two dynamics, the asymptotic limit of the FIM trace becomes a fundamental competition between polynomial gradient growth and exponential curvature collapse, leading to the vanishing FIM. This rigorously establishes that the geometric unboundedness of the Euclidean parameter space inevitably extinguishes the model's plasticity, a structural pathology that cannot be circumvented by simply tuning activation functions.

\subsection{Preservation of Plasticity in Quantum Neural Networks}

We now prove that QNNs are immune to this saturation mechanism due to the compactness of their parameter manifold.
\begin{definition}[QNN Setup and Ansatz Structure]
We define the QNN function $f(\mathbf{x}; \boldsymbol{\theta})$ as the expectation value of a measurement observable $O$ (typically a Pauli string):
\begin{equation}
    f(\mathbf{x}; \boldsymbol{\theta}) = s \cdot \langle 0 | U^\dagger(\mathbf{x}, \boldsymbol{\theta}) O U(\mathbf{x}, \boldsymbol{\theta}) | 0 \rangle.
\end{equation}
Here, $s$ is a fixed, non-trainable scalar. The unitary $U(\mathbf{x}, \boldsymbol{\theta})$ is composed of interleaved data encoding layers $S(\mathbf{x})$ and trainable layers $W(\boldsymbol{\theta})$. The trainable unitary is parameterized by rotation angles $\boldsymbol{\theta} = \{\theta_k\}_{k=1}^M$ generated by Pauli strings $H_k \in \{I, X, Y, Z\}^{\otimes N}$:
\begin{equation}
    W(\boldsymbol{\theta}) = \prod_{k=1}^M e^{-i \theta_k H_k}.
\end{equation}
Because $e^{-i \theta_k H_k}$ is periodic, the effective parameter space is the $M$-dimensional torus, $\mathbb{T}^M = [0, 2\pi)^M$.
\end{definition}

\begin{theorem}[Bounded FIM]
For a QNN defined above, the trace of the FIM under BCE loss satisfies three conditions:
\begin{enumerate}
    \item \textbf{Non-Vanishing:} The curvature term $\xi = p(1-p)$ is strictly lower-bounded.
    \item \textbf{Non-Explosion:} The gradient norm is strictly upper-bounded, preventing chaotic landscapes.
    \item \textbf{Recurrence:} The expected FIM trace over the parameter manifold is a non-zero constant.
\end{enumerate}
\end{theorem}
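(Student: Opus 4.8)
The plan is to derive all three parts of the theorem from a single structural fact: since $f(\mathbf{x};\boldsymbol{\theta}) = s\,\langle 0|U^\dagger(\mathbf{x},\boldsymbol{\theta})\,O\,U(\mathbf{x},\boldsymbol{\theta})|0\rangle$ is the expectation value of the bounded observable $O$ in a normalized state, one has the uniform bound $|f(\mathbf{x};\boldsymbol{\theta})| \le B$ with $B := |s|\,\|O\|$, for \emph{every} input $\mathbf{x}$ and \emph{every} point $\boldsymbol{\theta}$ of the parameter torus $\mathbb{T}^M$. The key contrast with the classical analysis is that here there is no free scale $\lambda$ that the optimizer can drive to infinity; the compactness of $\mathbb{T}^M$ is what plays that role. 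All three claims then follow from this boundedness together with the parameter-shift structure of the derivatives $\partial_{\theta_k} f$.

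For claim~1, I would note that $p = \sigma(f)$ is monotone in $f$, so $|f|\le B$ confines $p$ to the interval $[\sigma(-B),\sigma(B)]$ and hence $\xi = p(1-p) \ge \sigma(B)\bigl(1-\sigma(B)\bigr) > 0$ uniformly on $\mathbb{T}^M$. This is exactly the step that fails classically, where the sigmoid argument is $\lambda f_0$ with $\lambda\to\infty$. For claim~2, I would use that each generator $H_k$ is a Pauli string with $\|H_k\| = 1$: writing $\partial_{\theta_k} f = i s\,\langle\psi|[H_k,\tilde O_k]|\psi\rangle$, where $|\psi\rangle$ is a normalized state and $\tilde O_k$ is $O$ conjugated by part of the circuit (so $\|\tilde O_k\| = \|O\|$), one obtains $|\partial_{\theta_k} f| \le 2\|H_k\|\,\|\tilde O_k\|\,|s| = 2B$, hence $\|\nabla_{\boldsymbol{\theta}} f(\mathbf{x};\boldsymbol{\theta})\|^2 \le 4MB^2$ uniformly. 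Together with claim~1 and the bound $\xi\le 1/4$, this sandwiches
\[
 0 \;\le\; \mathrm{Tr}\!\bigl(F(\boldsymbol{\theta})\bigr) \;=\; \mathbb{E}_{\mathbf{x}}\!\bigl[\,\xi\,\|\nabla_{\boldsymbol{\theta}} f\|^2\,\bigr] \;\le\; M B^2
\]
for every $\boldsymbol{\theta}$, so the landscape is neither saturated nor pathologically steep.

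For claim~3, I would first observe that $\mathrm{Tr}(F(\boldsymbol{\theta}))$ is a smooth, non-negative function on the compact torus $\mathbb{T}^M$, so its mean $\overline{\mathrm{Tr}(F)} := (2\pi)^{-M}\!\int_{\mathbb{T}^M}\mathrm{Tr}(F(\boldsymbol{\theta}))\,d\boldsymbol{\theta}$ is a well-defined constant that depends only on the ansatz, the observable and the data distribution --- and in particular not on the continual-learning history --- and is bounded above by $MB^2$. To get the strictly positive lower bound I would use $\mathrm{Tr}(F(\boldsymbol{\theta})) \ge \sigma(B)\bigl(1-\sigma(B)\bigr)\sum_{k} \mathbb{E}_{\mathbf{x}}[(\partial_{\theta_k} f)^2]$ and integrate term by term: for each fixed $\mathbf{x}$, $\partial_{\theta_k} f(\mathbf{x};\cdot)$ is a trigonometric polynomial in $\theta_k$ that is not identically zero precisely when $f(\mathbf{x};\cdot)$ genuinely depends on $\theta_k$, which holds generically because $H_k$ does not commute with the rest of the circuit; Parseval then gives $\mathbb{E}_{\boldsymbol{\theta}}[(\partial_{\theta_k} f)^2] > 0$. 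Alternatively, matching the phrasing of the main text, one can assume the trainable block is deep enough to act as an approximate unitary $2$-design and evaluate $\mathbb{E}_{\boldsymbol{\theta}}[(\partial_{\theta_k} f)^2]$ in closed form via Haar (Weingarten) integration --- a barren-plateau-type computation~\cite{McClean2018bp} --- obtaining a positive value proportional to $\mathrm{Tr}(O^2)/2^N$ that is manifestly independent of training time.

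The hard part will be claim~3, and specifically upgrading the non-negativity of $\overline{\mathrm{Tr}(F)}$ to a genuine positive lower bound. The Parseval route is elementary but only yields positivity under a mild non-degeneracy condition on the encoding and data support, which must be checked to be non-vacuous; the $2$-design route gives an explicit constant but requires justifying that integrating $W(\boldsymbol{\theta})$ over the torus reproduces the low-order moments of the Haar measure on $SU(2^N)$ closely enough, which is exactly where depth assumptions and the Haar-integration machinery enter. Claims~1 and~2 and the upper bound on $\mathrm{Tr}(F)$, by contrast, are immediate consequences of operator-norm boundedness and the parameter-shift identity, so essentially all the content of the theorem is concentrated in the strictly positive lower bound of claim~3.
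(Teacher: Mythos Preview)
Your proposal is correct and follows essentially the same three-part structure as the paper's own proof: Part~1 via the uniform bound $|f|\le |s|\,\|O\|$ on the logit, Part~2 via a gradient bound (the paper uses the parameter-shift rule where you use the equivalent commutator estimate, differing only by an immaterial factor of~2), and Part~3 via Haar/2-design averaging to obtain $\mathbb{E}_{\boldsymbol{\theta}}[(\partial_k f)^2]\propto 1/D_{\text{eff}}$ as a training-time-independent constant. Your additional Parseval/trigonometric-polynomial route for Part~3 and your explicit flagging of the non-degeneracy or 2-design hypothesis needed for the strict positive lower bound are both more careful than the paper, which simply cites the barren-plateau literature without isolating the assumption.
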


\begin{proof}
Recall the FIM trace decomposition:
$$ \text{Tr}(F) = \mathbb{E}_{\mathbf{x}} \left[ \underbrace{p(1-p)}_{\xi} \times \underbrace{\sum_{k=1}^M (\partial_k f)^2}_{||\nabla f||^2} \right] $$

\textbf{Part 1: Lower Bound on Curvature.}
In classical networks, $|f| \to \infty$ as weights grow. In QNNs, the output is the expectation value of a Pauli observable $O$ with spectral norm $||O||_\infty = 1$. Thus, the logit is strictly bounded for all $\boldsymbol{\theta}$:
\begin{equation}
    |f(\mathbf{x}; \boldsymbol{\theta})| = |s \cdot \langle \psi | O | \psi \rangle| \le |s| \cdot ||O||_\infty = |s|.
\end{equation}
Since $|f| \le |s|$, the sigmoid derivative is strictly lower-bounded:
\begin{equation}
    1/4\geq\xi = \sigma(f)(1 - \sigma(f)) \ge \sigma(|s|)(1 - \sigma(|s|)) = C_{min} > 0.
\end{equation}

\textbf{Part 2: Upper Bound on Gradients.}
In QNNs, we can bound the gradient using the parameter-shift rule. For Pauli generators with eigenvalues $\pm 1$, the gradient is exact:
\begin{equation}
    \partial_k f = s \cdot \frac{1}{2} \left( \langle H_k \rangle_{\theta_k+\pi/2} - \langle H_k \rangle_{\theta_k-\pi/2} \right).
\end{equation}
Since the expectation value $\langle H_k \rangle$ is bounded by $[-1, 1]$, the gradient is strictly bounded by the scaling factor:
\begin{equation}
    |\partial_k f| \le |s| \cdot \frac{1}{2} \cdot (1+1) = |s|.
\end{equation}
Consequently, the total FIM trace has a strict global upper bound for any configuration of parameters:
\begin{equation}
    \text{Tr}(F) \le \underbrace{0.25}_{\max(\xi)} \cdot \sum_{k=1}^M (|s|)^2 = 0.25 \cdot M \cdot s^2 = C_{max} < \infty.
\end{equation}
This proves that the QNN landscape cannot become infinitely rugged or chaotic, i.e. extremely sensitive to training data.

\textbf{Part 3: Stable Average Capacity.}
With the FIM strictly upper bounded, we finally show it does not statistically vanish with growing weights $\lambda$ in CL. As parameters $\boldsymbol{\theta}$ traverse the compact torus $\mathbb{T}^M$, we examine the expected learning capacity.
According to results from concentration of measure in quantum circuits, the expected squared gradient for a Pauli generator scales inversely with the effective dimension of the Hilbert space $D_{\text{eff}}$, where $D_{\text{eff}} \propto 2^N$ for sufficient expressive ansatzes or global observables and $D_{\text{eff}} \propto \text{poly}(N)$ for barren plateau mitigated cases \cite{McClean2018bp}:
\begin{equation}
    \mathbb{E}_{\boldsymbol{\theta}} [(\partial_k f)^2]  \propto \frac{1}{D_{\text{eff}}}.
\end{equation}
Although this value depends on the system size or Hilbert space dimension, it is a static constant determined solely by the architecture and doesn't change over the training time. Unlike classical networks where the FIM decays exponentially with weight growth ($e^{-\lambda} \to 0$), the quantum gradient expectation remains invariant to the duration of training or $\lambda$.
Combining Parts 1, 2, and 3, we establish that the QNN plasticity is permanently confined to a stable, active regime:
\begin{equation}
    0 < \frac{C_{min}}{D_{\text{eff}}} \le \mathbb{E}[\text{Tr}(F)] \le C_{max} < \infty.
\end{equation}
\end{proof}

Note that the analytical derivation above specifically addresses binary classification tasks with the expectation value readout strategy, where the model output is defined as the expectation of a bounded observable. This formulation corresponds directly to the experimental setups employed in Split CIFAR-100 and Quantum Data classification.

\subsection{Contrast with barren plateaus and dynamic plasticity}
It is essential to distinguish the static challenge of barren plateaus (BP) from the dynamic loss of plasticity. BP is an initialization hurdle: as the number of qubits $n$ scales, the variance of gradients (and thus $\text{Tr}(F)$) vanishes exponentially as $1/2^n$ for generic circuits and random initialization. However, QNNs utilizing amplitude encoding can represent high-dimensional features of length $D$ using a logarithmic number of qubits $n = \log_2 D$. In our case, the 1,024-dimensional visual features from CIFAR-100 require only 10 qubits ($2^{10} = 1,024$). While the gradient variance in such circuits theoretically scales as $1/2^n = 1/D$, the dimensionality $D$ in real-world data is not asymptotically exponentially large. This ensures that the system possesses a high information density that maintains expressive power while keeping the gradient signal at a manageable floor of $1/D$. Since $D$ corresponds to fixed real-world input lengths, this gradient signal remains robustly within the trainable region despite the circuit's depth.

In contrast, loss of plasticity is a training-induced decay where the capacity collapses due to parameter drift into infinity. Our results show that for a QNN that has survived the BP regime (either through the use of high information density, local observables, specific circuit structures, or smart initialization), its learning capacity $1/D_{\text{eff}}$ remains a non-zero static constant. While the gradient signal may be small (scaling as $1/D$), it does not decay further as more tasks are learned.

To empirically verify this stability, we conducted a barren plateau analysis on the Split CIFAR-100 task setup. We varied the circuit depth from 1 to 60 and measured the squared norm of the gradient at initialization, averaged over 32 random seeds. As shown in Fig.~\ref{fig:bp_analysis}, the gradient signal $||\nabla_\theta \mathcal{L}||^2$ remains robust and non-vanishing (averaging $\sim 5 \times 10^{-2}$) across all depths, including our standard study depth of 30. The model resides in a robustly trainable regime. This is fundamentally different from classical models, where the FIM trace decays exponentially towards zero over the task sequence, regardless of the initial capacity.

\begin{figure}[h]
    \centering
    \includegraphics[width=0.45\textwidth]{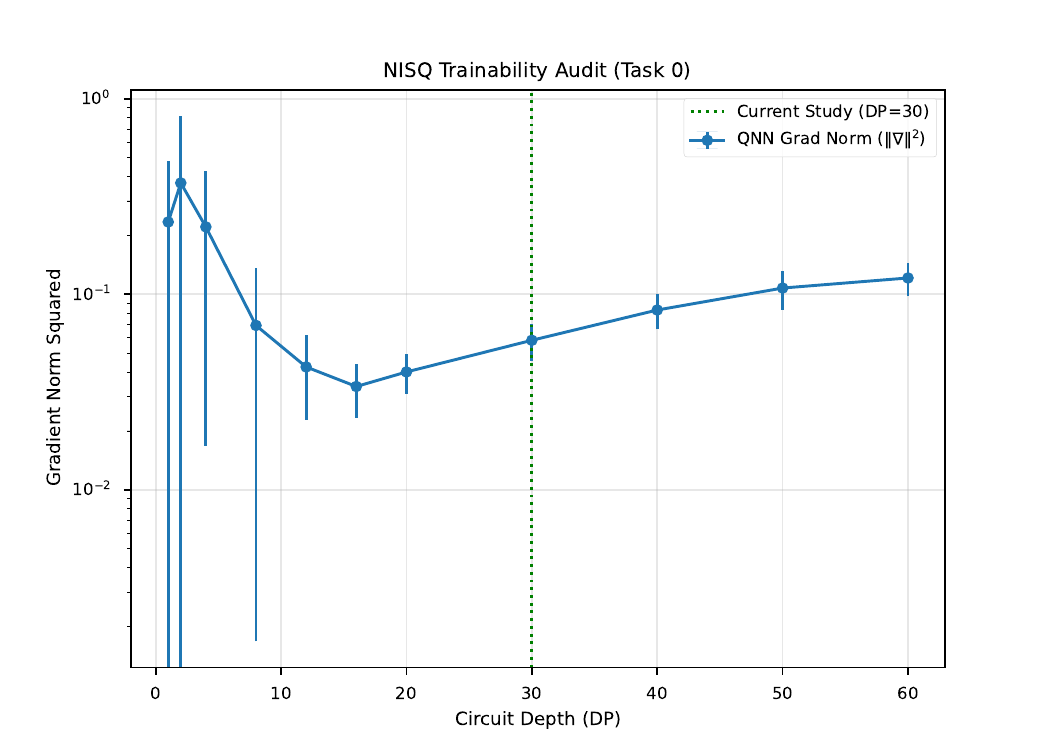}
    \caption{\textbf{Trainability of QNN for different depths.} The squared norm of the gradient is plotted as a function of the variational circuit depth for the first task of the Split CIFAR-100 benchmark. The stable, non-vanishing gradient floor empirically confirms that the QNN operates in a trainable regime, ensuring that the tracked plasticity metrics are not artifacts of initial untrainability.}
    \label{fig:bp_analysis}
\end{figure}

\section{Generality of Plasticity Preservation across Quantum Circuit Ansatzes}

To assess the universality of our findings, we investigate whether the preservation of plasticity is specific to the general $SU(4)$ ansatz employed in the main text or if it extends to standard, resource-constrained architectures common on near-term hardware.

We evaluate a hardware-efficient ansatz (HEA) on the Permuted MNIST benchmark. Unlike the main text model which uses arbitrary two-qubit unitaries, this ansatz enforces a structured topology with restricted gate sets, reducing the parameter density per layer.

The model operates on $N=10$ qubits with a circuit depth of $D=16$. Each layer consists of two sub-layers:
\begin{itemize}
    \item \textbf{Entanglement Layer:} Parameterized controlled-rotation-$X$ gates ($CR_X(\theta)$) are applied in a ring connectivity. For qubits indexed $i \in \{0, \dots, N-1\}$, gates are applied between pairs $(i, (i+1) \pmod N)$.
    \item \textbf{Rotation Layer:} A sequence of local rotations is applied to every qubit individually: $R_y(\theta_1) R_z(\theta_2) R_y(\theta_3)$.
\end{itemize}
The similar structures are widely used in quantum machine learning due to its compatibility with near-term hardware qubit connectivity. The training protocol (learning rate, optimizer, task sequence) remains identical to the Permuted MNIST experiment described in the main text.

The results over a sequence of 500 tasks are visualized in Fig.~\ref{fig:hea_plasticity}. The HEA-based QNN maintains a stable test accuracy, with no observable downward trend.
This result confirms that the loss of plasticity is not solved merely by the high expressivity of the $SU(4)$ gate, but by the fundamental properties of the quantum optimization manifold. Even with a restricted gate set, the parameters remain confined to the compact torus, preventing the weight saturation observed in classical models. It also demonstrates that the plasticity advantage persists in architectures designed for physical hardware implementation, suggesting that current-generation quantum devices could potentially realize these benefits.

\begin{figure}[h]
\centering
\includegraphics[width=0.5\textwidth]{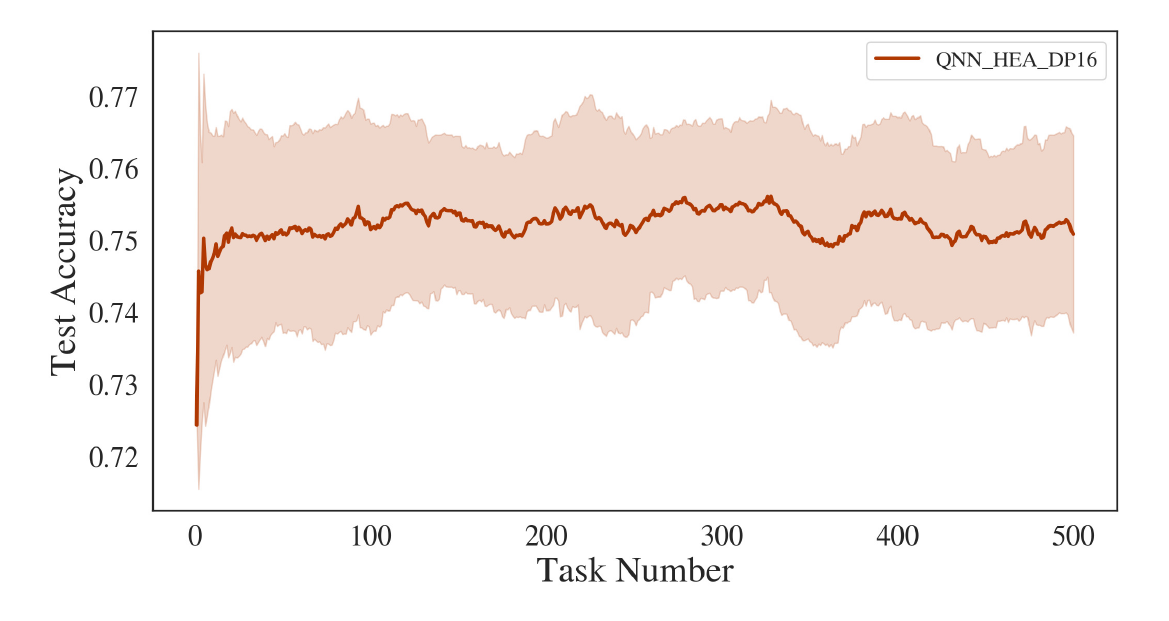}
\caption{\textbf{Generality of plasticity preservation across quantum architectures.} Test accuracy of a hardware-efficient ansatz with 16 layers over a 500-task sequence of permuted MNIST. The architecture utilizes a ring topology of parameterized $CR_x$ gates for entanglement, followed by local
$R_y$-$R_z$-$R_y$ sequence of rotations. Despite the restricted gate set compared to the $SU(4)$ model, the HEA maintains consistent learnability, confirming that the preservation of plasticity is a generic structural property of variational quantum circuits arising from their compact parameter space. The curve shows the moving
average over 40-task windows, with shaded areas representing
the standard deviation of the underlying data within each
window.}
\label{fig:hea_plasticity}
\end{figure}

\section{Control Experiment with Periodic Activation Functions}

To rigorously test whether the preservation of plasticity in QNNs arises merely from the bounded nature of the quantum state output ($|\langle Z \rangle| \le 1$) or fundamentally from the compactness of the parameter manifold, we perform a control experiment using a classical MLP with a periodic activation function.

We replace the standard ReLU activation with the sine function, $\phi(x) = \sin(x)$ \cite{Sitzmann2020}. Like the quantum expectation value, the output of a sine neuron is strictly bounded in $[-1, 1]$. If boundedness of the representation is the sole factor for plasticity, this model should also exhibit plasticity.

The experiment is conducted on the Split CIFAR-100 benchmark (3,000 tasks) using the same architecture as the standard MLP, but with the modifications that all hidden neurons use $\sin(x)$ activation. The epoch for training on each task is $20$.

The results are presented in Fig.~\ref{fig:sin_mlp}. We observe an evident decoupling between training and test performance and also a decay in test accuracy:
\begin{itemize}
    \item \textbf{Training Accuracy (Dashed Lines):} The model maintains near-perfect training accuracy ($100\%$) throughout the sequence. This indicates that the model retains high capacity.
    \item \textbf{Test Accuracy (Solid Lines):} Despite perfect training performance, the test accuracy degrades significantly over time, dropping from $\sim$75\% to $\sim$55\%.
\end{itemize}

 This behavior confirms the theoretical derivation before. Although the activation is bounded, the weights $\mathbf{w}$ reside in an unbounded Euclidean space. As the magnitude $||\mathbf{w}||$ increases to fit more tasks, the effective frequency of the sine neurons ($\sin(\mathbf{w}^T \mathbf{x})$) increases. This transforms the optimization landscape into a high-frequency, extremely rugged terrain. While the optimizer can find a narrow minimum for the current training batch (hence 100\% train accuracy), the resulting function generalizes poorly to the test set and becomes brittle to the distributional shifts inherent in continual learning.

Therefore, bounding the activation function is insufficient to preserve plasticity from classical side. The indefinite plasticity observed in QNNs is therefore due to the compactness of the parameter manifold, which prevents the unbounded growth of complexity that drives the generalization collapse in periodic classical networks.

\begin{figure}[h]
\centering
\includegraphics[width=0.5\textwidth]{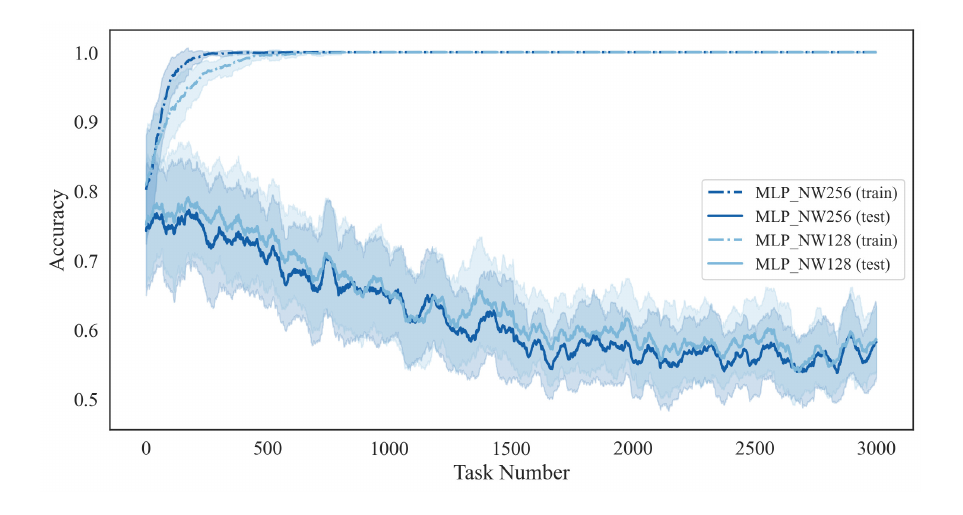} 
\caption{\textbf{Plasticity loss in classical MLPs with periodic activation on Split CIFAR-100.} The plot shows the evolution of training accuracy (dashed lines) and test accuracy (solid lines) over 3,000 binary classification tasks for network widths of 128 (light blue) and 256 (dark blue). Sin-MLPs maintain perfect training accuracy, while the test accuracy degrades continuously, exhibiting a wide generalization gap. This demonstrates that bounding the output representation in classical models is not sufficient to keep plasticity. All curves show the moving average over 50-task windows, with shaded areas representing the standard deviation of the underlying data within each window.}
\label{fig:sin_mlp}
\end{figure}

\section{Performance Results with Standard Deviation on Smooth Window}

In Fig. \ref{fig:cifar}\textbf{b} and Fig. \ref{fig:qdata}\textbf{b}, we present the moving average of test accuracy to highlight the long-term plasticity trends. To provide a complete picture of the learning stability and task-to-task variability, we present here the same results with full standard deviation bands included.

Fig.~\ref{fig:cifar_variance} illustrates the results for the Split CIFAR-100 benchmark. The shaded regions represent the standard deviation of the test accuracy calculated within each 150-task moving window. The large band demonstrates the great variability of difficulty for image classification from different classes. Fig.~\ref{fig:cifar_variance}\textbf{b} directly shows the slope estimation of QNN accuracy line. The estimated slope is negligible ($\approx 0.01\%$ drop per 100 tasks). Crucially, the $p$-value associated with this slope estimation is \textbf{$p \gg 0.05$}. This statistical evidence confirms that the QNN performance is indistinguishable from a stationary process. The non-significant slope implies that the model is not slowly dying, but rather fluctuating around a stable equilibrium of learnability, effectively preserving plasticity indefinitely.

Fig.~\ref{fig:qdata_variance} presents the corresponding analysis for the quantum data continual learning task. 

\begin{figure*}[h]
\centering
\includegraphics[width=0.8\textwidth]{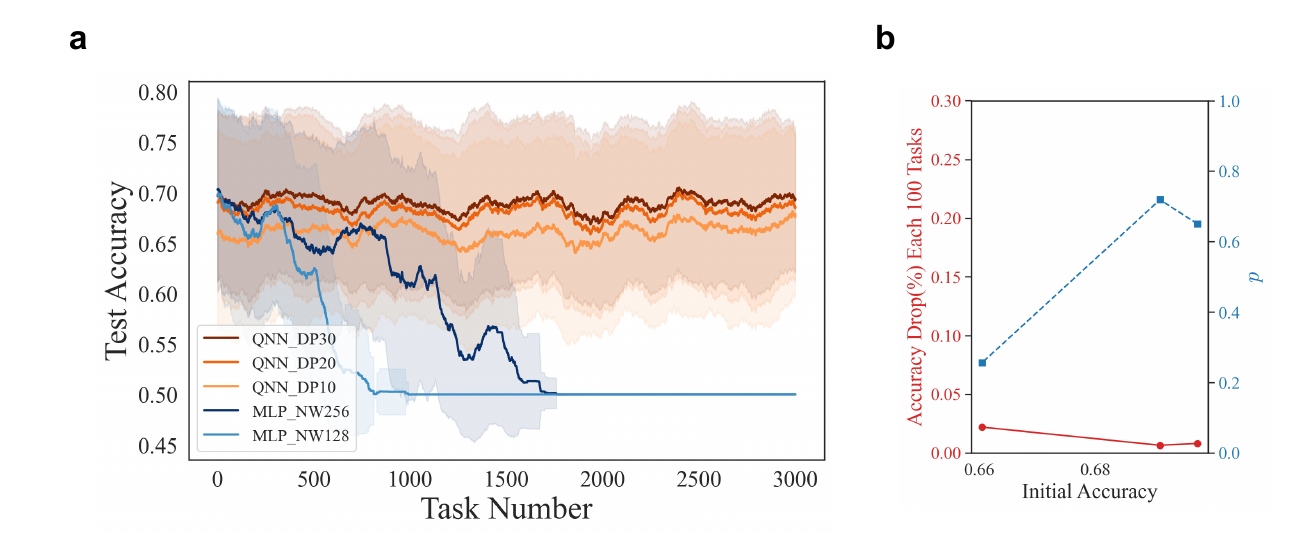}
\caption{\textbf{Plasticity dynamics with performance variability on Split CIFAR-100.} \textbf{a}, Test accuracy as a function of task number for QNNs and MLPs, identical to Fig. 2b but visualizing the standard deviation (shaded regions) within each 150-task window. \textbf{b}, The rate of plasticity loss (accuracy drop per 100 tasks) versus initial accuracy for QNN. Blue circles represent the corresponding $p$ value for the linear estimation whose null hypothesis is zero slope.}
\label{fig:cifar_variance}
\end{figure*}

\begin{figure*}[h]
\centering
\includegraphics[width=0.8\textwidth]{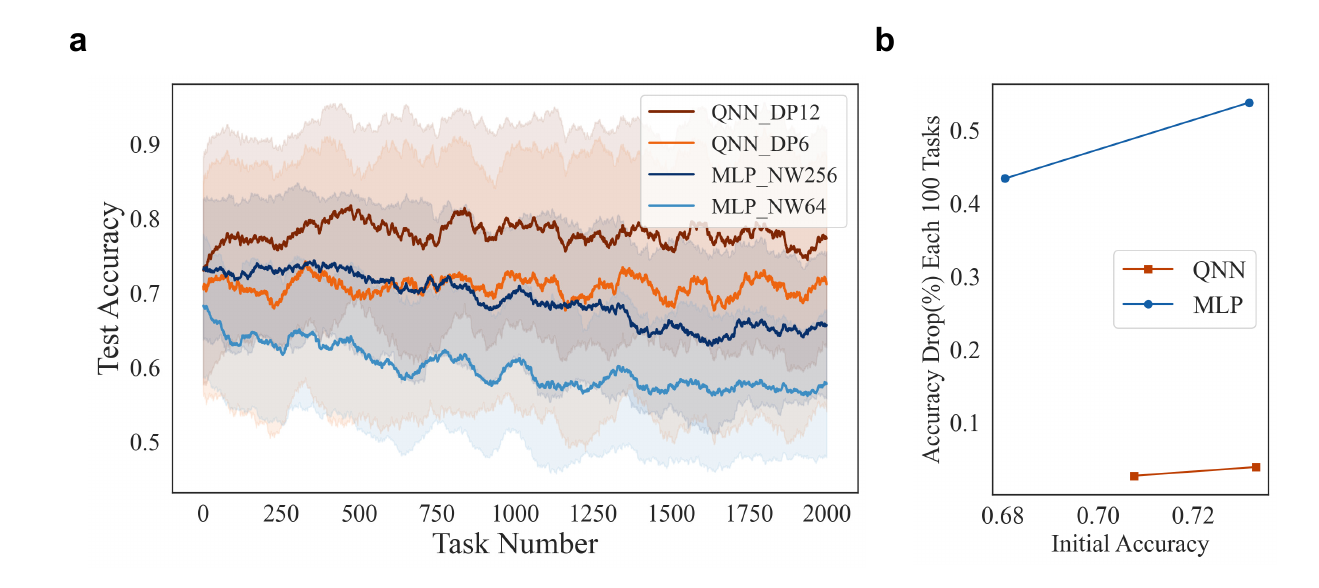}
\caption{\textbf{Plasticity dynamics with performance variability on Quantum Data.} \textbf{a}, Test accuracy as a function of task number, identical to Fig. 4b but including the standard deviation (shaded regions) over 100-task moving windows. \textbf{b}, Average accuracy drop per 100 tasks plotted against initial accuracy. The large value for MLPs confirms a consistent loss of plasticity, while QNNs maintain a near-zero drop rate, confirming the plasticity preservation.}
\label{fig:qdata_variance}
\end{figure*}

\section{Hyperparameters in architectures and setups}
To ensure the reproducibility of our results and to demonstrate the universality of the plasticity preservation mechanism, we provide a detailed breakdown of the hyperparameters, network architectures, and initialization strategies employed across all four experimental domains: Permuted MNIST, Split CIFAR-100, Reinforcement Learning (Ant-v4), and Quantum Data Classification.

Crucially, these experiments span a wide range of configurations:
\begin{itemize}
    \item \textbf{Model Complexity:} From compact QNNs ($\sim$200 parameters) to large classical MLPs ($>$500,000 parameters).
    \item \textbf{Optimization Regimes:} From standard supervised learning with Adam to complex PPO updates.
    \item \textbf{Initialization Schemes:} From standard Glorot/Orthogonal initialization to simple uniform distributions.
\end{itemize}

Despite this heterogeneity in hyperparameters across tasks, the qualitative difference in learning dynamics, i.e., classical collapse versus quantum plasticity, remains consistent across all settings. This suggests that the observed advantage is not an artifact of specific tuning but arises from the fundamental structural differences.

The initial state of a network can significantly influence its training trajectory. As detailed in Table \ref{tab:initialization}, we employ industry-standard initialization schemes for the classical baselines to ensure they started from optimal conditions.
Even with these stability-enhancing initializations, classical models fail to preserve plasticity over long task sequences. Conversely, QNNs utilize simple uniform or normal distributions for their rotation angles. 

\begin{table*}[h]
\centering
\caption{\textbf{Weight initialization strategies across experiments.} }
\label{tab:initialization}
\renewcommand{\arraystretch}{1.3}
\begin{tabular}{lll}
\hline
\textbf{Experiment} & \textbf{Model} & \textbf{Weight Initialization Scheme} \\
\hline
\textbf{Permuted MNIST} & Classical MLP & Glorot Uniform  \\
 & Deep QNN & Uniform Distribution $\mathcal{U}[0, 2\pi]$ \\
\hline
\textbf{Split CIFAR-100} & Classical MLP & Glorot Uniform  \\
 & Deep QNN & Uniform Distribution $\mathcal{U}[0, 2\pi]$ \\
\hline
\textbf{RL Ant-v4} & Classical Agent & Orthogonal Initialization (Gain=$\sqrt{2}$) \\
 & Quantum Agent & Standard Normal Distribution $\mathcal{N}(0, 1)$ \\
\hline
\textbf{Quantum Data} & Classical MLP & Glorot Uniform  \\
 & Deep QNN & Uniform Distribution $\mathcal{U}[-0.1, 0.1]$ \\
\hline
\end{tabular}
\end{table*}

The specific hyperparameters for the RL agents are listed in Table \ref{tab:ppo_hyperparams}. To make a fair comparison, all the RL learning hyperparameters listed here are the same for classical and quantum agents.

\begin{table*}[h]
\centering
\caption{\textbf{Hyperparameter settings for the PPO agents in the Ant-v4 environment in RL task.}}
\label{tab:ppo_hyperparams}
\begin{tabular}{lcc}
\hline
\textbf{Hyperparameter} & \textbf{Classical and Quantum Agent}  \\
\hline
Horizon (Steps per rollout) & 2048  \\
Mini-batch Size & 128 \\
Number of Epochs (per update) & 10 \\
Discount Factor  & 0.99\\
GAE Parameter & 0.95 \\
Clip Range ($\epsilon$) & 0.2\\
Value Loss Coefficient ($c_1$) & 0.5  \\
Entropy Coefficient ($c_2$) & 0.0 \\
Max Gradient Norm & 0.5\\
Parallel Environments & 16  \\
Total Timesteps & 30,000,000 \\
\hline
\end{tabular}
\end{table*}

Table \ref{tab:param_counts} provides a comparative summary of the trainable parameter counts for all models. We note that QNNs maintain plasticity with orders of magnitude fewer parameters, highlighting the efficiency of QML.

To track the underlying capacity dynamics, we perform a detailed model Fisher Information Matrix analysis on the extended 3000-task Split CIFAR-100 sequence. The corresponding tracking hyperparameters, including analysis frequency and the effective rank threshold, are summarized in Table \ref{tab:analysis_hyperparams}.

\begin{table*}[h]
\centering
\caption{\textbf{Comparison of model complexity in terms of total trainable parameters.} Parameter counts for Classical MLPs are derived from the specific architectures (hidden widths) described in the Methods. For RL, the count includes both the policy and value networks. For QNNs, counts are based on the number of $SU(4)$ gates (15 parameters per gate) plus any classical trainable readout heads. QNNs achieve plasticity with orders of magnitude fewer parameters.}
\label{tab:param_counts}
\renewcommand{\arraystretch}{1.5}
\begin{tabular}{lllr}
\hline
\textbf{Experiment} & \textbf{Model} & \textbf{Configuration} & \textbf{Total Parameters} \\
\hline
\textbf{Permuted MNIST} & Classical MLP & Width 16 (1 Hidden Layer) & $\approx$ 13,000 \\
 & Deep QNN & Depth 16 (10 Qubits) & $\approx$\textbf{2,200} \\
\hline
\textbf{Split CIFAR-100} & Classical MLP & Width 256 (2 Hidden Layers) & $\approx$ 330,000 \\
 & Deep QNN & Depth 30 (10 Qubits) & $\approx$\textbf{4,100} \\
\hline
\textbf{RL Ant-v4} & Classical Agent & Policy + Value Networks & $\approx$ 370,000 \\
 & Hybrid Quantum Agent & QNN (12 Layers) + Linear Heads & $\approx$ \textbf{6,200} \\
\hline
\textbf{Quantum Data} & Classical MLP & Width 256 (2 Hidden Layers) & $\approx$ 590,000 \\
 & Deep QNN & Depth 12 (10 Qubits) & $\approx$ \textbf{1600} \\
\hline
\end{tabular}
\end{table*}

\begin{table*}[h]
\centering
\caption{\textbf{Hyperparameter settings for the Information Plasticity Analysis (Model FIM analysis).} These settings apply to the extended 3,000-task investigation on Split CIFAR-100 benchmarking plasticity preservation mechanism.}
\label{tab:analysis_hyperparams}
\begin{tabular}{lc}
\hline
\textbf{Hyperparameter} & \textbf{Value} \\
\hline
Analysis Frequency & Every 5 tasks \\
Probe Sample Size ($M$) & 64 (Task 0 Training Data) \\
Effective Rank Threshold & $\lambda_i > 10^{-4} \cdot \text{Tr}(F)$ \\
Total CL Tasks & 3,000 \\
Optimizer & Adam ($10^{-3}$) \\
Batch Size & 128 \\
\hline
\end{tabular}
\end{table*}

\section{Supplementary Benchmarks and Robustness Analysis}
To quantify the robustness of the quantum plasticity advantage and isolate its structural origins, we conduct targeted comparative benchmarks. 

\textbf{L2 regularization.} We examine whether the loss of plasticity in classical networks is simply a consequence of unregularized weight growth. To test this, we evaluate a classical MLP with varying degrees of explicit weight decay on the Split CIFAR-100 task sequence. As shown in Fig.~\ref{fig:R1}, $L_2$ regularization fails to preserve plasticity; the models still eventually collapse to the random guessing level ($0.50$).The effectiveness of this approach is highly sensitive to the choice of the weight decay hyperparameter. While small weight decays ($10^{-4}$ and $10^{-3}$) offer negligible help in maintaining accuracy over hundreds of tasks, a larger decay ($10^{-2}$) is overly restrictive, causing the network to fail to learn meaningful representations from the very beginning. The observations are consistent with Ref. \cite{Dohare2024}. This sensitivity suggests that $L_2$ regularization cannot fundamentally prevent the accumulation of dormant units or the decline of the representation's effective rank. These results highlight that extrinsic constraints in Euclidean space do not resolve the fundamental geometric saturation problem of the classical landscape.

\begin{figure}[h]
    \centering
    \includegraphics[width=0.45\textwidth]{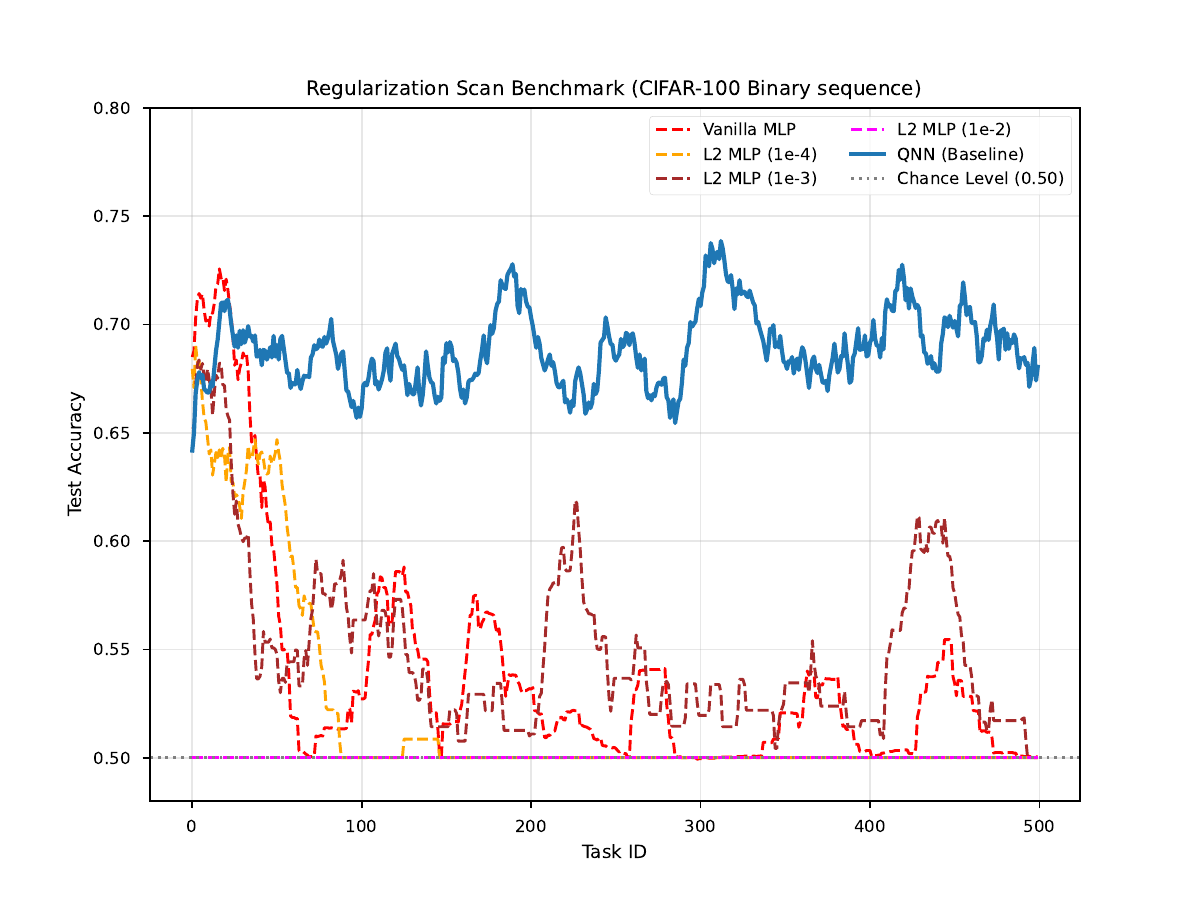}
    \caption{\textbf{Plasticity preservation under explicit regularization.} Evolution of test accuracy on the CIFAR-100 Binary sequence benchmark across 500 tasks (Adam optimizer with learning rate $0.005$, 5 epochs for one task). The performance of the baseline QNN is compared against a Vanilla MLP and classical MLPs with varying degrees of L2 regularization. While the QNN sustains plasticity and maintains accuracy over time, explicit regularization fails to prevent catastrophic forgetting in the MLPs. Notably, small L2 weight decays ($10^{-4}$ and $10^{-3}$) offer little help in preserving plasticity, with accuracy eventually degrading to the chance level (0.50). Conversely, a large L2 weight decay ($10^{-2}$) is overly restrictive, causing the network to directly fail learning from the very beginning and remain at the chance level. }
    \label{fig:R1}
\end{figure}

\textbf{NISQ noise robustness.} We verify the mechanism's robustness on hardware-realistic noisy intermediate-scale quantum devices by simulating a depolarizing noise model ($p_x=p_y=p_z=5 \times 10^{-4}$). As visualized in Fig.~\ref{fig:R3}, the QNN maintains a stable learning performance despite the systematic reduction in absolute baseline accuracy. This confirms that the geometric preservation of plasticity is robust to small quantum errors.

\begin{figure}[h]
    \centering
    \includegraphics[width=0.45\textwidth]{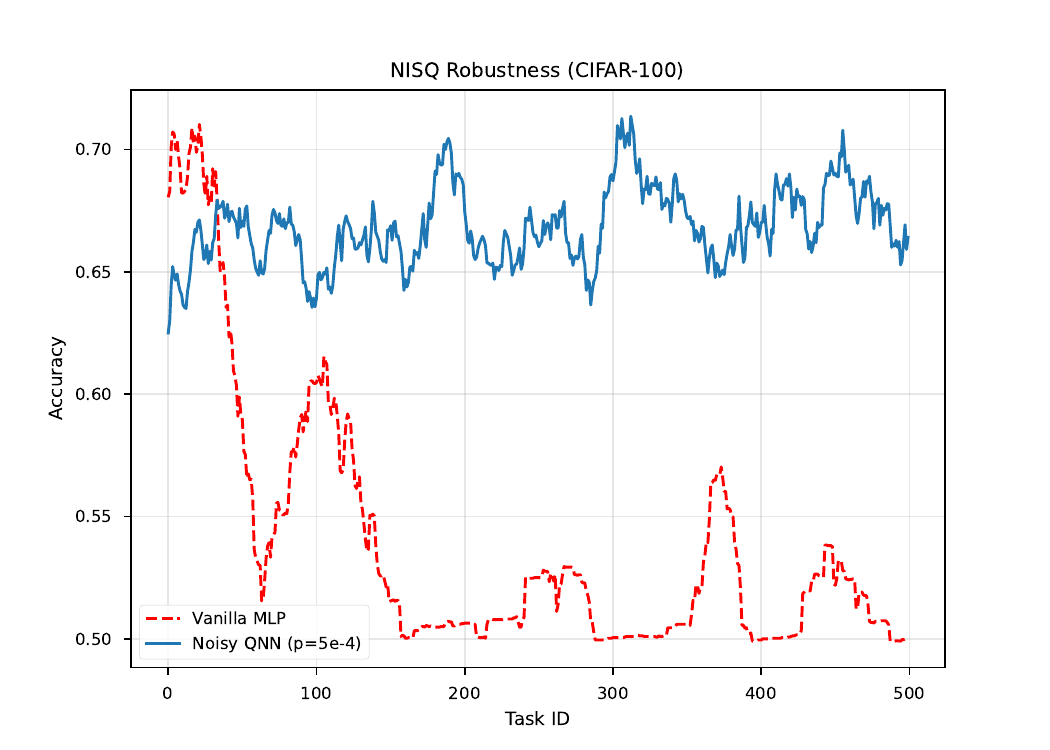}
    \caption{\textbf{Robustness to hardware noise in the NISQ era.} Continual learning performance of the QNN under a depolarizing noise model (noise ratio $p_x=p_y=p_z=5 \times 10^{-4}$, Adam optimizer with learning rate $0.005$, 5 epochs for one task) on the Split CIFAR-100 benchmark. Despite the reduction in baseline performance, the QNN maintains its learnability over time, contrasting with the dynamic failure of classical architectures.}
    \label{fig:R3}
\end{figure}

\begin{figure*}[t]
	\centering
	\includegraphics[width=0.8\textwidth]{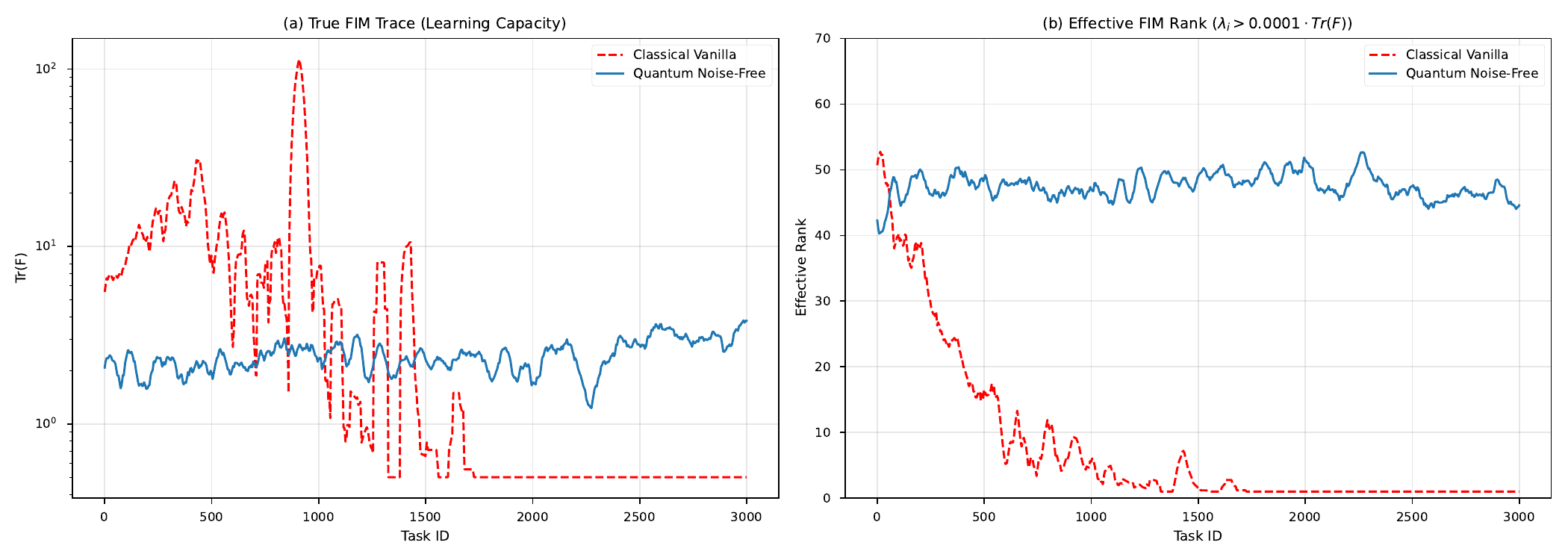}
	\caption{\textbf{Information-theoretic assessment of learning capacity.} (a) Evolution of the trace of the model FIM over 3,000 sequential tasks on the Split CIFAR-100 benchmark. (b) Monitoring of the effective FIM rank ($\lambda_i > 10^{-4} \text{Tr}(F)$). The functional dimensionality of the classical MLP (red dashed) collapses to unity, while the QNN (blue) preserves a high-rank landscape throughout the sequential learning process, providing a structural explanation for the persistent plasticity advantage.}
	\label{fig:appendix_fim}
\end{figure*}

\section{Model Fisher Information and Rank Analysis}
To provide a more fundamental information-theoretic perspective on the loss of plasticity, we conduct an additional analysis using the \textit{Model Fisher Information Matrix}, as opposed to the empirical FIM presented in the main text. Mathematically, while the empirical FIM is computed using the specific targets $y_i$ present in the batch (representing the local curvature of the training loss surface), the model FIM is defined as the expectation over the model's own predictive distribution:
\begin{equation}
F = \mathbb{E}_{\mathbf{x} \sim p(\mathbf{x})} \mathbb{E}_{y \sim p(y|\mathbf{x}, \theta)} [\nabla_\theta \log p(y|\mathbf{x}, \theta) \nabla_\theta \log p(y|\mathbf{x}, \theta)^T].
\end{equation}
Physically, the model FIM represents the Riemannian metric of the model's manifold, measuring the sensitivity of the entire output distribution to parameter perturbations. 

We observe that the model FIM and the empirical FIM exhibit qualitatively identical behavior in our lifelong learning experiments. This alignment stems from the fact that our models are consistently trained to a high-accuracy regime. When the model predictive distribution $p(y|\mathbf{x}, \theta)$ is well-optimized and highly confident (matching the true data distribution), the expectation over $y$ in the model FIM calculation becomes dominated by the terms corresponding to the ground-truth labels. In this high-confidence regime, the information-theoretic curvature of the likelihood and the curvature of the actual loss surface converge, rendering the empirical FIM a faithful and computationally efficient proxy for the fundamental learning capacity.

In our extended 3,000-task investigation on Split CIFAR-100, we monitor both the trace of the model FIM and its \textit{effective rank}, defined as the number of eigenvalues $\lambda_i$ that satisfy $\lambda_i > 10^{-4} \cdot \text{Tr}(F)$. The analysis is conducted using a fixed probe set of $M=64$ samples from the initial task, with the FIM being computed every 5 tasks. As shown in Fig.~\ref{fig:appendix_fim}, the model FIM trace exhibits a qualitatively identical decay pattern to the empirical FIM in classical networks. More significantly, the effective rank of the classical MLP undergoes a catastrophic collapse, dropping from its initial value of 64 to exactly 1 by Task 1,500. This indicates that the classical model's functional parameter space effectively implodes, leaving only a single effective direction for learning. In contrast, the QNN maintains a high and stable effective rank (averaging $\sim$ 48) across the entire sequence, confirming that the unitary manifold preserves a high-dimensional landscape for continuous adaptation. These results support that the pattern of the trace and the effective rank of the FIM is qualiatively similar and both are reliable proxies for the functional plasticity of the learner.

\clearpage
\newpage
\let\oldaddcontentsline\addcontentsline
\renewcommand{\addcontentsline}[3]{}

\let\addcontentsline\oldaddcontentsline

\end{document}